\newcommand{\set}[1]{\left\{#1\right\}}
\newcommand{\relmiddle}[1]{\mathrel{}\middle#1\mathrel{}}
\newcommand{\inset}[2]{\left\{#1 \relmiddle| #2\right\}}
\newcommand{\name}[1]{\emph{#1}}
\newcommand{\size}[1]{\left| #1 \right|}
\newcommand{\order}[1]{O\left(#1\right)}
\newcommand{\bigomega}[1]{\Omega\left(#1\right)}
\newtheorem{lemma}{Lemma}
\newtheorem{theorem}[lemma]{Theorem}
\newtheorem{problem}{Problem}
\newcommand{\cand}[1]{C\left(#1\right)}
\newcommand{\sig}[1]{\mathcal{#1}}
\newcommand{\Del}[3][]{Del_{#1}\left(#2, #3\right)}
\newcommand{\DelG}[3][]{Del'_{#1}\left(#2, #3\right)}
\newcommand{\Fam}[1]{\mathcal{F}\left(#1\right)}
\newcommand{\Sol}[1]{\mathcal{S}\left(#1\right)}
\newcommand{\PC}[1]{\mathcal{E}\left(#1\right)}
\newcommand{\Parent}[1]{\mathcal{P}\left(#1\right)}
\newcommand{\pv}[1]{pv\left(#1\right)}
\newcommand{\True}{\mathtt{True}}
\newcommand{\False}{\mathtt{False}}
\title{Efficient Enumeration of Dominating Sets for Sparse Graphs}
\author[1]{Kazuhiro Kurita}
\author[2]{Kunihiro Wasa}
\author[1]{Hiroki Arimura}
\author[2]{Takeaki Uno}
\affil[1]{IST, Hokkaido University, Sapporo, Japan\\
  \texttt{\{k-kurita, arim\}@ist.hokudai.ac.jp}}
\affil[2]{National Institute of Informatics, Tokyo, Japan\\
  \texttt{\{wasa, uno\}@nii.ac.jp}}
\begin{document}
\maketitle
\begin{abstract}
    A dominating set $D$ of a graph $G$ is a set of vertices such that 
any vertex in $G$ is in $D$ or its neighbor is in $D$. 
Enumeration of minimal dominating sets in a graph is one of central problems in enumeration study 
since enumeration of minimal dominating sets corresponds to enumeration of minimal hypergraph transversal. 
However, enumeration of dominating sets including non-minimal ones has not been received much attention.
In this paper, 
we address enumeration problems for dominating sets from sparse graphs which are degenerate graphs and graphs with large girth, 
and we propose two algorithms for solving the problems. 
The first algorithm enumerates all the dominating sets for a $k$-degenerate graph in $\order{k}$ time per solution using $\order{n + m}$ space, 
where $n$ and $m$ are respectively the number of vertices and edges in an input graph. 
That is, the algorithm is optimal for graphs with constant degeneracy such as trees, planar graphs, $H$-minor free graphs with some fixed $H$. 
The second algorithm enumerates all the dominating sets in constant time per solution
for input graphs with girth at least nine. 

\end{abstract}
\section{Introduction}
\label{sec:intro}

One of the fundamental tasks in computer science is 
to enumerate all subgraphs satisfying a given constraint 
such as cliques~\cite{Makino:Uno:SWAT:2004}, spanning trees~\cite{Shioura:Tamura:SICOMP:1997}, 
cycles~\cite{Birmele:Ferreira:SODA:2013}, and so on. 
One of the approaches to solve enumeration problems is to design exact exponential algorithms, i.e., \name{input-sensitive} algorithms.
Another mainstream of solving enumeration problems is to design \name{output-sensitive} algorithms, i.e., the computation time depends on the sizes of both of an input and an output. 
An algorithm $\mathcal{A}$ is \name{output-polynomial} if the total computation time is polynomial of the sizes of input and output. 
$\mathcal{A}$ is an \name{incremental polynomial time algorithm} if 
the algorithm needs $\order{poly(n, i)}$ time when the algorithm outputs the $i^\text{th}$ solution after outputting the $(i-1)^\text{th}$ solution, 
where $poly(\cdot)$ is a polynomial function. 
$\mathcal{A}$ runs in \name{polynomial amortized time} if the the total computation time is $\order{poly(n)N}$, 
where $n$ and $N$ are respectively the sizes of an input and an output. 
In addition, 
$\mathcal{A}$ runs in \name{polynomial delay} if the maximum interval between two consecutive solutions is $\order{poly(n)}$ time and 
the preprocessing and postprocessing time is $\order{poly(n)}$. 
From the point of view of tractability, 
efficient algorithms for enumeration problems have been 
widely studied
~\cite{
Avis:Fukuda:DAM:1996,
Garey:Johnson:BOOK:1990,
Kante:Limouzy:WADS:2015,
Alessio:Roberto:ICALP:2016,
Wasa:Arimura:Uno:ISAAC:2014,
Eppstein:Loffler:EXP:2013,
Shioura:Tamura:SICOMP:1997,
Makino:Uno:SWAT:2004,
Tsukiyama:Shirakawa:JACM:1980,
Uno:WADS:2015,
Ferreira:Grossi:ESA:2014,
Birmele:Ferreira:SODA:2013,
Cohen:Kimefeld:Sagiv:JCSS:2008
}. 
On the other hands, 
Lawler~\textit{et al.} show that 
some enumeration problems have no output-polynomial time algorithm unless $P=NP$~\cite{Lawler:Lenstra:Rinnooy:SICOMP:1980}. 
In addition, recently, 
Creignou~\textit{et al.} show a tool for showing the hardness of enumeration problems~\cite{Creignou:Kroll:Pichler:Stritek:Vollmer:LATA:2017}. 

A dominating set is one of a fundamental substructure of graphs and
finding the minimum dominating set problem is a classical NP-hard problem~\cite{Garey:Johnson:BOOK:1990}. 
A vertex set $D$ of a graph $G$ is a dominating set of $G$ 
if every vertex in $G$ is in $D$ or has at least one neighbors in $D$. 
The enumeration of \emph{minimal} dominating sets of a graph is closely related to 
the enumeration of \emph{minimal hypergraph transversals} of a hypergraph~\cite{Eiter:Gottlob:Makino:SICOMP:2003}. 
Kant{\'e} \textit{et al}.~\cite{Kante:Limouzy:SIAM:2014} show
that the minimal dominating set enumeration problem and the minimal hypergraph transversal enumeration problem 
are equivalent, that is, the one side can be solved in output-polynomial time if 
the other side can be also solved in output-polynomial time.
Several algorithms that run in polynomial delay 
have been developed when we restrict input graphs, such as 
permutation graphs~\cite{Kante:Limouzy:SIAM:2014}, 
chordal graphs~\cite{Kante:Limouzy:WG:2015}, 
line graphs~\cite{Kante:Limouzy:WADS:2015}, 
graphs with bounded degeneracy~\cite{Kante:Limouzy:FCT:2011},
graphs with bounded tree-width~\cite{Courcelle:DAM:2009}, 
graphs with bounded clique-width~\cite{Courcelle:DAM:2009}, and 
graphs with bounded (local) LMIM-width~\cite{Golovach:Heggernes:Algo:2018}.
Incremental polynomial-time algorithms have also been developed, such as
chordal bipartite graphs~\cite{Golovach:Heggernes:Elsevier:2016}, 
graphs with bounded conformality~\cite{Boros:Elbassioni:LATIN:2004}, and
graphs with girth at least seven~\cite{Golovach:Heggernes:Algorithmica:2015}. 
Kant\'{e} \textit{et al.}~\cite{Kante:Limouzy:ISAAC:2012} show 
that the conformality of the closed neighbourhood hypergraphs of line graphs, path graphs, and ($C_4$, $C_5$, claw)-free graphs 
is constant. 
However, it is still open whether 
there exists an output-polynomial time algorithm for enumerating minimal dominating sets from general graphs.

Since the number of solutions exponentially increases compared to the minimal version, 
even if we can develop an enumeration algorithm that runs in constant time per solution, 
the algorithm becomes theoretically much slower than some enumeration algorithm for minimal dominating sets. 
However, 
when we consider the real-world problem,
we sometimes use another criteria for enumerating solutions that form dominating sets in a graph. 
That is, 
enumeration algorithms for minimal dominating sets may not fit in with other variations of minimal domination problems.  
E.g., 
a tropical dominating set~\cite{Auraiac:Bujtas:WALCOM:2016} and a rainbow dominating set~\cite{Brevsar:Henning:TJM:2008}
are such a dominating set. 
Thus, when we enumerate solutions of such domination problems, 
our algorithm becomes a base-line algorithm for these problems. 
Thus, our main goal is to develop an efficient enumeration algorithm for dominating sets.

\noindent
\textbf{Main results:}
In this paper, we consider the relaxed problems,
i.e., enumeration of all dominating sets that include non-minimal ones in a graph.
We present two algorithms, $\EnumDSD$ and $\EnumDSG$. 
$\EnumDSD$ enumerates all dominating sets in $\order{k}$ time per solution, 
where $k$ is the degeneracy of a graph (Theorem~\ref{theo:dsd:time}). 
Moreover, $\EnumDSG$ enumerates all dominating sets in constant time per solution 
for a graph with girth at least nine (Theorem~\ref{theo:girth}),
where the girth is the length of minimum cycle in the graph.

By straightforwardly using an enumeration framework such as the reverse search technique~\cite{Avis:Fukuda:DAM:1996}, 
we can obtain an enumeration algorithm for the problem that runs in $\order{n}$ or $\order{\Delta}$ time per solution,
where $n$ and $\Delta$ are respectively the number of vertices and the maximum degree of an input graph. 
Although dominating sets are fundamental in computer science, 
no enumeration algorithm for dominating sets that runs in strictly faster than such a trivial algorithm 
has been developed so far.  
Thus, to develop efficient algorithms, 
we focus on the \name{sparsity} of graphs as being a good structural property and, 
in particular, on the \name{degeneracy} and \name{girth}, which are the measures of sparseness.
As our contributions, 
we develop two optimal algorithms for enumeration of dominating sets in a sparse graph. 
We first focus on the degeneracy of an input graph.  
A graph is $k$-degenerate~\cite{Lick:CJM:1970} 
if any subgraph of the graph has a vertex whose degree is at most $k$. 
The degeneracy of a graph is the minimum value of $k$ such that the graph is $k$-degenerate. 
Note that $k \le \Delta$ always holds. 
It is known that some graph classes have constant degeneracy, 
such as forests, grid graphs, outerplanar graphs, planer graphs, bounded tree width graphs, 
and $H$-minor free graphs for some fixed $H$~\cite{Thomason:JCT:2001,Chandran:Subramanian:JCombin:2005}. 
A $k$-degenerate graph has a \emph{good} vertex ordering, 
called a \name{degeneracy ordering}~\cite{Matula:Beck:JACM:1983},
as shown in Section~\ref{subsec:deg}. 
So far, 
this ordering has been used to develop 
efficient enumeration algorithms~\cite{Alessio:Roberto:ICALP:2016,Wasa:Arimura:Uno:ISAAC:2014,Eppstein:Loffler:EXP:2013}. 
By using this ordering and the reverse search technique~\cite{Avis:Fukuda:DAM:1996}, 
we show that our proposed algorithm \EnumDSD can solve the relaxed problem in $\order{k}$ time per solution.  
This implies that \EnumDSD can optimally enumerate all the dominating sets in an input graph with constant degeneracy. 

We next focus on the girth of a graph. 
Enumeration of minimal dominating sets can be solved efficiently if an input graph has no short cycles 
since its connected subgraphs with small diameter form a tree. 
Indeed, this local tree structure has been used in minimal dominating sets enumeration~\cite{Golovach:Heggernes:Algorithmica:2015}. 
For the relaxed problem, 
by using the reverse search technique, 
we can easily show that the delay of our proposed algorithm \EnumDSG for general graphs is $\order{\Delta^3}$ time. 
However, if an input graph has the large girth, 
then each recursive call generates enough solutions, that is, we can amortize the complexity of \EnumDSG. 
Thus, by amortizing the time complexity using this local tree structure, 
we show that the problem can be solve in constant time per solution for graphs with girth at least nine. 

yy
\section{A Basic Algorithm Based on Reverse Search}
\label{sec:enum}


Let $G = (V(G), E(G))$ be a simple undirected graph, that is, $G$ has no self loops and multiple edges,  
with vertex set $V(G)$ and edge set $E(G)$ is a set of pairs of vertices. 
If no confusion arises, we will write $V = V(G)$ and $E = E(G)$. 
Let $u$ and $v$ be vertices in $G$. 
An edge $e$ with $u$ and $v$ is denoted by $e = \set{u, v}$. 
$u$ and $v$ are \name{adjacent} if  $\set{u, v} \in E$.
We denote by $N_G(u)$ the set of vertices that are adjacent to $u$ on $G$  
and by $N_G[u] = N_G(u) \cup \set{u}$.
We say $v$ is a \name{neighbor} of $u$ if $v \in N_G(u)$. 
The \name{set of neighbors} of $U$ is defined as 
$N(U) = \bigcup_{u \in U}N_G(u) \setminus U$.
Similarly, let $N[U]$ be $\bigcup_{u \in U}N_G(u) \cup U$.
Let $d_G(v) = \size{N_G(v)}$ be the \name{degree} of $u$ in $G$. 
We call the vertex $v$ \name{pendant} if $d_G(v) = 1$.
$\Delta(G) = \max_{v \in V} d(v)$ denotes the maximum degree of $G$. 
A set $X$ of vertices is a \name{dominating set} 
if $X$ satisfies $N[X] = V$.

For any vertex subset $V' \subseteq V$, 
we call $G[V'] = (V', E[V'])$ an \name{induced subgraph} of $G$, 
where $E[V'] = \inset{\set{u, v} \in E(G)}{u, v \in V'}$.
Since $G[V']$ is uniquely determined by $V'$, 
we identify $G[V']$ with $V'$. 
We denote by $G \setminus \set{e} = (V, E \setminus \set{e})$ and 
$G\setminus \set{v} = G[V\setminus\set{v}]$.
For simplicity, we will use $v \in G$ and $e \in G$ to refer to
$v \in V(G)$ and $e \in E(G)$, respectively.

We now define the dominating set enumeration problem as follows:  
\begin{problem}
    \label{prob:main}
    Given a graph $G$, then 
    output all dominating sets in $G$ without duplication. 
\end{problem}

In this paper, we propose two algorithms \EnumDSD and \EnumDSG for solving Problem~\ref{prob:main}. 
These algorithms use the degeneracy ordering and the local tree structure, respectively. 
Before we enter into details of them, 
we first show the basic idea for them, called \name{reverse search method} that is proposed by Avis and Fukuda~\cite{Avis:Fukuda:DAM:1996}
and is one of the framework for constructing enumeration algorithms. 


An algorithm based on reverse search method enumerates solutions
by traversing on an implicit tree structure on the set of solution, called a \name{family tree}.  
For building the family tree, 
we first define the parent-child relationship between solutions as follows: 
Let  $G = (V, E)$ be an input graph with $V = \set{v_1, \dots, v_n}$ and  
$X$ and $Y$ be dominating sets on $G$. 
We arbitrarily number the vertices in $G$ from $1$ to $n$ and 
call the number of a vertex the \name{index} of the vertex. 
If no confusion occurs, 
we identify a vertex with its index. 
We assume that there is a total ordering $<$ on $V$ according to the indices. 
$\pv{X}$, called the \name{parent vertex},
is the vertex in $V \setminus X$ with the minimum index.
For any dominating set $X$ such that $X \neq V$, 
$Y$ is the \name{parent} of $X$ if $Y = X \cup \set{\pv{X}}$. 
We denote by $\Parent{X}$ the parent of $X$. 
Note that since any superset of a dominating set also dominates $G$, 
thus,  
$\Parent{X}$ is also a dominating set of $G$. 
We call $X$ is a \name{child} of $Y$ if $\Parent{X}= Y$. 
We denote by $\Fam{G}$ a digraph on the set of solutions $\Sol{G}$. 
Here, the vertex set of $\Fam{G}$ is $\Sol{G}$ and
the edge set $\PC{G}$ of $\Fam{G}$ is defined according to the parent-child relationship. 
We call $\Fam{G}$ the \name{family tree} for $G$ and 
call $V$ the \name{root} of $\Fam{G}$. 
Next, we show that $\Fam{G}$ forms a tree rooted at $V$. 

Our basic algorithm \EnumDS is shown in Algorithm~\ref{algo:naive}. 
We say $\cand{X}$ the \name{candidate set} of $X$ and 
define  $\cand{X} = \inset{v \in V}{N[X \setminus \set{v}] = V \land \Parent{X\setminus\set{v}} = X}$. 
Intuitively, the candidate set of $X$ is the set of vertices such that any vertex $v$ in the set, 
removing $v$ from $X$ generates another dominating set. 
We show a recursive procedure \AllChildren{$X, \cand{X}, G$}
actually  generates all children of $X$ on $\Fam{G}$. 
We denote by $ch(X)$ the set of children of $X$, 
    and by $gch(X)$ the set of grandchildren of $X$. 
    
From Lemmas~\ref{lem:connected}, \ref{lem:tree}, and ~\ref{lem:enum}, 
we can obtain the correctness of \EnumDS. 
    
\begin{algorithm}[t]
  \caption{\texttt{EDS} enumerates all dominating sets in amortized polynomial time. }
  \label{algo:naive}
  \Procedure(\tcp*[f]{$G$: an input graph}){\EnumDS{$G = (V, E)$}}{ 
    \AllChildren{$V, V, G$}\;
  }
  \Procedure(\tcp*[f]{$X$: the current solution}){\AllChildren{$X, \cand{X}, G = (V, E)$}}{
    Output $X$\label{step:start}\;
    \For(){$v \in \cand{X}$}{
        $Y \gets X \setminus \set{v}$;  $\cand{Y} \gets \inset{u \in \cand{X}}{N[Y\setminus\set{u}] = V \land \Parent{Y \setminus\set{u}} = Y}$\;
      \AllChildren{$Y, \cand{Y}, G$}\;\label{step:recursive:zero}        
    }
    \Return \;
 }
\end{algorithm}

\begin{lemma}
  \label{lem:connected}
  For any dominating set $X$, 
  by recursively applying the parent function $\Parent{\cdot}$ to $X$ at most $n$ times, 
  we obtain $V$. 
\end{lemma}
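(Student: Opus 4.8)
The plan is to prove the statement by a simple monotonicity argument on cardinality. First I would check that the parent function is well defined on every dominating set other than $V$: if $X$ is a dominating set with $X \neq V$, then $V \setminus X$ is nonempty, so $\pv{X}$, the minimum-index vertex of $V \setminus X$, exists, and $\Parent{X} = X \cup \set{\pv{X}}$ is again a dominating set, since every superset of a dominating set dominates $G$ (as already noted in the text). Hence the recursive application of $\Parent{\cdot}$ never gets stuck before reaching $V$.

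The key step is to observe that each application of $\Parent{\cdot}$ increases the cardinality by exactly one. Because $\pv{X} \in V \setminus X$, we have $\size{\Parent{X}} = \size{X} + 1$. I would then consider the sequence $X_0 = X$ and $X_{i+1} = \Parent{X_i}$, defined as long as $X_i \neq V$. By the well-definedness observation every $X_i$ is a dominating set, and by the cardinality increment we get $\size{X_i} = \size{X} + i$. Since each $X_i \subseteq V$ forces $\size{X_i} \leq n$, the sequence cannot continue indefinitely: as soon as $\size{X_i} = n$ we must have $X_i = V$, which is the root, and the recursion halts.

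Finally, I would read off the bound on the number of steps. The cardinalities strictly increase from $\size{X}$ up to $n$, so the number of applications of $\Parent{\cdot}$ needed to reach $V$ is exactly $n - \size{X} \leq n$, giving the claimed bound of at most $n$ applications. I do not expect any genuine obstacle here; the argument is purely a termination-by-strict-increase-with-upper-bound observation. The only points requiring a line of care are confirming that $\pv{\cdot}$ is well defined at each intermediate step and that every intermediate set stays a dominating set, and both follow immediately from the superset property stated in the excerpt.
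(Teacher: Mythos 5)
Your proof is correct and follows essentially the same route as the paper's: the parent vertex exists whenever $X \neq V$, each application of $\Parent{\cdot}$ adds exactly one vertex, and the cardinality bound $n$ forces arrival at $V$ within $n - \size{X} \le n$ steps. Your write-up merely makes explicit two points the paper leaves implicit, namely that intermediate sets remain dominating sets and that the recursion is well defined at each step.
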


\begin{proof}
  For any dominating set $X$, 
  since $\pv{v}$ always exists, 
  there always exists the parent vertex for $X$. 
  In addition,  $\size{\Parent{X} \setminus X} = 1$. 
  Hence, the statement holds. 
\end{proof}

\begin{lemma}
    \label{lem:tree}
    $\Fam{G}$ forms a tree. 
\end{lemma}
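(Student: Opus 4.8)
The plan is to verify the two defining properties of a rooted tree for the digraph $\Fam{G}$: that every node other than the root $V$ has a unique parent, and that iterating the parent function from any node reaches $V$ without repetition. Since $\Sol{G}$ is finite, these two facts together force $\Fam{G}$ to be a tree rooted at $V$, and this is the structure of the argument I would follow.

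First I would establish that the parent function induces a well-defined, single-valued map on $\Sol{G} \setminus \set{V}$. For any dominating set $X \neq V$, the set $V \setminus X$ is nonempty, so $\pv{X}$ --- its minimum-index vertex --- exists and is unique; hence $\Parent{X} = X \cup \set{\pv{X}}$ is uniquely determined. As already observed in the text, $\Parent{X}$ is again a dominating set, so it is a genuine node of $\Fam{G}$. Thus each non-root node has exactly one outgoing parent-edge, while the root $V$ has none, since $\Parent{\cdot}$ is undefined when $X = V$.

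Next I would rule out directed cycles. The key observation is that $\size{\Parent{X}} = \size{X} + 1$, so the cardinality strictly increases along every parent-edge. Consequently the sequence $X, \Parent{X}, \Parent{\Parent{X}}, \dots$ has strictly increasing sizes and can never revisit a node, which excludes any cycle. By Lemma~\ref{lem:connected}, this same sequence reaches $V$ after at most $n$ steps. Therefore every node of $\Fam{G}$ is joined to $V$ by a unique directed path of parent-edges.

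The conclusion is then immediate: a finite digraph in which one distinguished node $V$ has no parent, every other node has exactly one parent, and every node reaches $V$ by following parents (equivalently, a connected digraph on $\size{\Sol{G}}$ nodes with exactly $\size{\Sol{G}} - 1$ edges) is precisely a tree rooted at $V$. I expect no serious obstacle here, since Lemma~\ref{lem:connected} already supplies connectivity to the root; the only point requiring genuine care is the strict monotonicity of $\size{\cdot}$ under $\Parent{\cdot}$, which is what simultaneously yields acyclicity and guarantees that the parent chain terminates.
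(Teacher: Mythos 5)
Your proof is correct and follows essentially the same route as the paper's: both arguments rest on the fact that every non-root solution has exactly one parent (so $\Fam{G}$ has $\size{\Sol{G}}-1$ edges) together with connectivity to the root $V$ supplied by Lemma~\ref{lem:connected}, and then invoke the standard characterization of trees. Your extra observation that $\size{\Parent{X}} = \size{X}+1$ forces acyclicity is a harmless addition; the paper gets the same conclusion implicitly from the edge count plus connectivity.
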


\begin{proof}
    Let $X$ be any solution in $\Sol{G}\setminus\set{V}$.
    Since $X$ has exactly one parent and $V$ has no parent,  
    $\Fam{G}$ has $\size{V(\Fam{G})} - 1$ edges. 
    In addition, since there is a path between $X$ and $V$ by Lemma~\ref{lem:connected}, 
    $\Fam{G}$ is connected. 
    Hence,  the statement holds. 
\end{proof}

\begin{lemma}
\label{lem:enum}
    Let $X$ and $Y$ be distinct dominating sets in a graph $G$. 
    $Y \in ch(X)$ if and only if there is a vertex $v \in \cand{X}$ 
    such that $X = Y \cup \set{v}$.
\end{lemma}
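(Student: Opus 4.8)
The plan is to prove the biconditional by directly unfolding the three relevant definitions: the parent vertex $\pv{\cdot}$, the parent map $\Parent{\cdot}$, and the candidate set $\cand{\cdot}$. The single technical fact I will use repeatedly is that whenever $v \in X$ and $v \notin Y$ with $X = Y \cup \set{v}$, we have $X \setminus \set{v} = Y$; since $X$ and $Y$ are assumed distinct, the equality $X = Y \cup \set{v}$ already forces $v \notin Y$, so this rewriting is always available.

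For the forward direction, I would assume $Y \in ch(X)$, i.e. $\Parent{Y} = X$. By definition of the parent map this means $X = Y \cup \set{\pv{Y}}$, so setting $v := \pv{Y}$ immediately yields a vertex with $X = Y \cup \set{v}$. It remains to verify $v \in \cand{X}$. Since $v = \pv{Y} \in V \setminus Y$ we have $v \notin Y$, hence $X \setminus \set{v} = Y$. Then $N[X \setminus \set{v}] = N[Y] = V$ because $Y$ is a dominating set, and $\Parent{X \setminus \set{v}} = \Parent{Y} = X$ by assumption; both conditions in the definition of $\cand{X}$ hold, so $v \in \cand{X}$.

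For the backward direction, I would start from a vertex $v \in \cand{X}$ with $X = Y \cup \set{v}$. Distinctness of $X$ and $Y$ gives $v \notin Y$, while $v \in X$, so again $X \setminus \set{v} = Y$. The membership $v \in \cand{X}$ directly supplies $\Parent{X \setminus \set{v}} = X$, and substituting $X \setminus \set{v} = Y$ yields $\Parent{Y} = X$, i.e. $Y \in ch(X)$. The companion candidate condition $N[X \setminus \set{v}] = V$ simultaneously certifies that $Y$ is a dominating set, consistent with the hypothesis.

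Both directions are essentially bookkeeping once the definitions are expanded, so I do not expect a genuine obstacle; the only point needing care is the set-algebra step $X \setminus \set{v} = Y$, which hinges on using distinctness of $X$ and $Y$ to conclude $v \notin Y$. If one omits this, the rewriting could fail (it would give only $X \setminus \set{v} \subseteq Y$), so I would make sure to invoke the distinctness hypothesis explicitly before every such substitution.
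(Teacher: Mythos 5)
Your proof is correct and takes essentially the same route as the paper: both directions follow by unfolding the definitions of $\pv{\cdot}$, $\Parent{\cdot}$, and $\cand{X}$, with the backward direction immediate and the forward direction checking that $v = \pv{Y}$ satisfies both conditions defining $\cand{X}$ (using that $Y$ is a dominating set and $\Parent{Y} = X$). The only cosmetic difference is that the paper casts the forward direction as a proof by contradiction while you argue it directly, and your explicit attention to the step $X \setminus \set{v} = Y$ via distinctness is a point the paper glosses over.
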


\begin{proof}
    The if part is immediately shown from the definition of a candidate set. 
    We show the only if part by contradiction. 
    Let $Z$ be a dominating set in $ch(X)$ such that $Z = X\setminus\set{v'}$, where $v' \in Z$. 
    We assume that $v' \notin \cand{X}$. 
    From $v' \notin \cand{X}$, $N[\Parent{Z}] \neq V$ or $\Parent{Z} \neq X$. 
    Since $Z$ is a child of $X$, 
    $\Parent{Z} = X$, and thus, $N[\Parent{Z}] = V$. 
    This contradicts $v' \notin \cand{X}$. 
    Hence,  the statement holds.
\end{proof}

\begin{theorem}
  \label{theo:enum}
  By traversing $\Fam{G}$, 
  \EnumDS solves Problem~\ref{prob:main}.
\end{theorem}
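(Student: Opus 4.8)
The plan is to prove two complementary facts: that \EnumDS outputs every dominating set of $G$, and that it outputs each of them exactly once. Since Problem~\ref{prob:main} asks precisely for all dominating sets without duplication, establishing these two facts suffices. My strategy is to read the execution of \EnumDS as a depth-first traversal of $\Fam{G}$ and to lean on the three preceding lemmas, which already package the structural facts I need.

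First I would note that \EnumDS launches the search by calling \AllChildren on the root $V$, which is a dominating set because $N[V] = V$. By Lemma~\ref{lem:tree}, $\Fam{G}$ is a tree rooted at $V$, and by Lemma~\ref{lem:connected} every solution reaches $V$ by at most $n$ applications of $\Parent{\cdot}$; hence the tree has finite height and every dominating set occupies a unique node at a well-defined depth. This reduces the theorem to showing that \AllChildren, started at the root, visits each node of this finite tree exactly once.

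The core of the argument would be an induction on the height of the subtree rooted at the argument $X$, proving that \AllChildren{$X, \cand{X}, G$} outputs exactly the solutions in that subtree, each once. A leaf $X$ has $\cand{X} = \emptyset$, so the procedure only outputs $X$. For the inductive step, the procedure outputs $X$ and then iterates over $v \in \cand{X}$; by Lemma~\ref{lem:enum} the family $\inset{X \setminus \set{v}}{v \in \cand{X}}$ is exactly the set $ch(X)$ of children of $X$, with distinct $v$ giving distinct children, so the loop recurses on each child precisely once. Because $\Fam{G}$ is a tree, the child-subtrees are pairwise disjoint and, together with $X$ itself, exhaust the subtree rooted at $X$; applying the inductive hypothesis to each child then yields the claim. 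Instantiating $X = V$ shows the entire tree, i.e. all of $\Sol{G}$, is enumerated exactly once.

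The step I expect to be the main obstacle is verifying that the candidate set handed to each recursive call is the correct one, i.e. that $\inset{u \in \cand{X}}{N[Y\setminus\set{u}] = V \land \Parent{Y\setminus\set{u}} = Y}$ equals $\cand{Y}$ for a child $Y = X \setminus \set{v}$; only then does Lemma~\ref{lem:enum} legitimately apply inside the recursion. This comes down to $\cand{Y} \subseteq \cand{X}$: given $u \in \cand{Y}$, the set $X \setminus \set{u}$ dominates $G$ since it is a superset of the dominating set $Y \setminus \set{u}$, and the only delicate point is the parent-vertex condition $\pv{X \setminus \set{u}} = u$, which I would settle by an index comparison showing that $u = \pv{Y \setminus \set{u}}$ forces $u$ to precede every vertex of $V \setminus X$. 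Everything else in the proof is a routine tree-traversal argument resting on Lemmas~\ref{lem:connected}, \ref{lem:tree}, and~\ref{lem:enum}.
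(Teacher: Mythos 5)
Your proposal is correct and takes essentially the same route as the paper, which obtains the theorem directly by combining Lemma~\ref{lem:connected}, Lemma~\ref{lem:tree}, and Lemma~\ref{lem:enum} into a depth-first traversal of $\Fam{G}$ rooted at $V$. The one step you treat explicitly---that restricting the recursive candidate computation to $u \in \cand{X}$ loses no element of $\cand{Y}$, i.e. $\cand{Y} \subseteq \cand{X}$---is exactly what the paper defers to Lemma~\ref{lem:subset} in Section~3, and your index-comparison argument for the parent-vertex condition matches the proof given there.
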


\section{Efficient Enumeration for Bounded Degenerate Graphs}
\label{subsec:deg}

\begin{algorithm}[!t]
  \caption{\texttt{EDS-D} enumerates all dominating sets in $\order{k}$ time per solution. }
  \label{algo:dsd}
    \Procedure(\tcp*[f]{$G$: an input graph}){\EnumDSD{$G = (V, E)$}}{
        \lFor{$v \in V$}{$D_v \gets \emptyset$}
        \AllChildren{$V, V, \sig D(V) := \set{D_1, \dots, D_{\size{V}}}$}\;
    }
    \Procedure{\AllChildren{$X, C, \sig D$}}{
        Output $X$\;\label{algo:dsd:startrec}
        $C'  \gets \emptyset$; 
        $\sig D' \gets \sig D$\tcp*{$\sig D' := \set{D'_1, \dots, D'_{|V|}}$}\label{algo:dsd:init}
        \For(\tcp*[f]{$v$ has the largest index in $C$}){$v \in C$}{  \label{algo:dsd:loop}
            $Y \gets X \setminus \set{v}$\; \label{algo:dsd:get:Y}
            $C \gets C \setminus \set{v}$\tcp*{Remove vertices in $\Del[3]{X}{v}$.} \label{algo:dsd:get:C}
            $\cand{Y} \gets$ \CandD{$X, v, C$}\tcp*{Vertices larger than $v$ are not in $C$.}      \label{algo:dsd:cand-d}
            $\sig D(Y) \gets$ \DomList{$v, Y, X, \cand{Y}, C'\oplus\cand{Y}, \sig D'$}\;\label{algo:dsd:domlist}
            \AllChildren{$Y, \cand{Y}, \sig D(Y)$}\; \label{algo:dsd:rec}
            $C' \gets \cand{Y}$;  $\sig D' \gets \sig D(Y)$\; \label{algo:dsd:record:previous}
            \lFor{$u \in N(v)^{v<}$}{ \label{algo:dsd:make:start}
                $D'_u \gets D'_u \cup \set{v}$\label{algo:dsd:make}
            } \label{algo:dsd:make:end}
        }
    }
    \Procedure(){\CandD{$X, v, C$}}{
        $Y \gets X \setminus \set{v}$;  $Del_1 \gets \emptyset$; $ Del_2 \gets \emptyset$\;
        \For{$u \in (N(v) \cap C) \cup N(v)^{v<}$}{ \label{algo:dsd:candd:del1:loop:start}
            \eIf{$u < v$}{ \label{algo:dsd:candd:del1:loop:smaller:start}
                \lIf{$N(u)^{u<} \cap Y = \emptyset \land N(u)^{<u} \cap Y = \emptyset$}{
                    $Del_1 \gets Del_1 \cup \set{u}$\label{algo:dsd:candd:del1:loop:smaller:end}
                } 
            }{\label{algo:dsd:candd:del1:loop:larger:start}
                \lIf{$N[u] \cap (X \setminus C) = \emptyset \land \size{N[u] \cap C } = 2$}{
                    $Del_2 \gets Del_2 \cup (N[u] \cap C)$
                }\label{algo:dsd:candd:del1:loop:larger:end}
            } \label{algo:dsd:candd:del1:loop:end}
        }
    \Return $C \setminus (Del_1 \cup Del_2)$\tcp*{$C$ is $\cand{X\setminus\set{v}}$}
    } 
    \Procedure{\DomList$(v, Y, X, C'\oplus \cand{Y}, \sig D')$}{
        
        \For{$u \in C' \oplus C(Y)$}{
            \For{$w \in N(u)^{u<}$}{
                \If{$u \notin D'_w(X)$}{
                    \lIf{$u \notin C'$}{
                        $D'_w \gets D'_w \cup \set{u}$ 
                    }
                    \lElse{
                        $D'_w \gets D'_w \setminus \set{u}$
                    }
                }
            }
        }
        
        \For{$u \in N(v)^{v<}$}{ \label{algo:dsd:domlist:neighbor:add:loop:start}
            \lIf{$u\in X$}{
                $D'_v \gets D'_v \cup \set{u}$
            }
        } \label{algo:dsd:domlist:neighbor:add:loop:end}

        \Return $\sig D'$\tcp*{$\sig D'$ is $\sig D(Y)$}
    }
\end{algorithm}

\begin{figure}[t]
    \centering
    \includegraphics[width=0.6\textwidth]{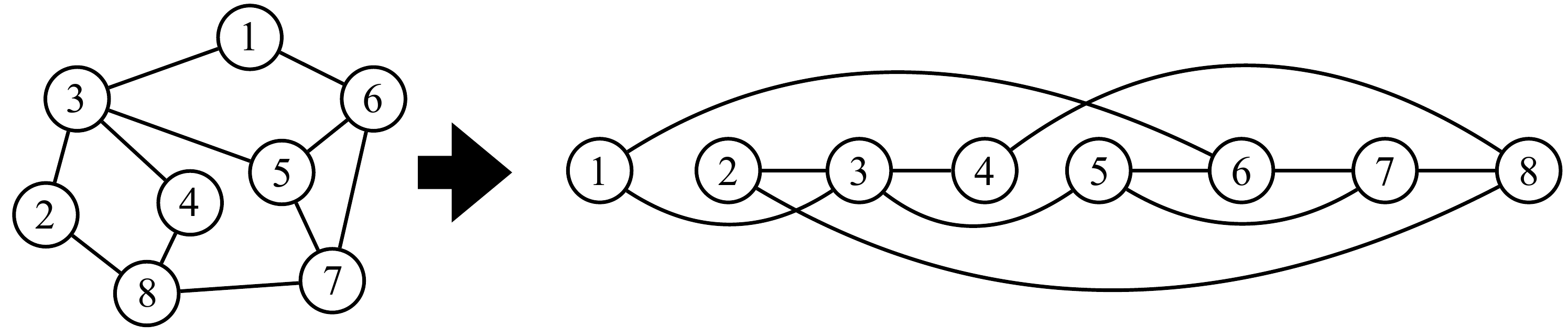}
    \caption{An example of a degeneracy ordering for a $2$-degenerate graph $G$. 
    In this ordering, each vertex $v$ is adjacent to 
    vertices at most two whose indices are larger than $v$.  }
    \label{fig:dege}
\end{figure}

The bottle-neck of \EnumDS is the maintenance of candidate sets. 
Let $X$ be a dominating set and $Y$ be a child of $X$. 
We can easily see that the time complexity of \EnumDS is $\order{\Delta^2}$ time per solution 
    since a removed vertex $u \in \cand{X}\setminus\cand{Y}$ has the distance at most two from $v$. 
In this section, we improve \EnumDS by focusing on the degeneracy of an input graph $G$. 
$G$ is a \name{$k$-degenerate graph}~\cite{Lick:CJM:1970} if 
for any induced subgraph $H$ of $G$, the minimum degree in $H$ is less than or  equal to $k$. 
The \name{degeneracy} of $G$ is the smallest $k$ such that $G$ is $k$-degenerate. 
A $k$-degenerate graph has a \emph{good} vertex ordering. 
The definition of orderings of vertices in $G$,  called a \name{degeneracy ordering} of $G$, 
is as follows: 
for any vertex $v$ in $G$, 
the number of vertices that are larger than $v$ and adjacent to $v$ is at most $k$. 
We show an example of a degeneracy ordering of a graph in Fig.~\ref{fig:dege}. 
Matula and Beck show that 
the degeneracy and a degeneracy ordering of $G$ can be obtained in $\order{n + m}$ time~\cite{Matula:Beck:JACM:1983}. 
Our proposed algorithm \EnumDSD, shown in Algorithm~\ref{algo:dsd}, achieves amortized $\order{k}$ time enumeration by using this good ordering. 
In what follows, 
we fix some degeneracy ordering of $G$ and number the indices of vertices from $1$ to $n$ according to the degeneracy ordering. 
We assume that for each vertex $v$ and each dominating set $X$, 
$N[v]$ and $\cand{X}$ are stored in a doubly linked list and sorted by the ordering. 
Note that the larger neighbors of $v$ can be listed in $\order{k}$ time. 
Let us denote by $V^{<v} = \set{1,2, \dots, v-1}$ and $V^{v<} = \set{v+1, \dots n}$. 
Moreover, $A^{<v} = A \cap V^{v<}$
and $A^{v<} =  A \cap V^{<v}$  for a subset $A$ of $V$. 
We first show the relation between $\cand{X}$ and $\cand{Y}$. 

\begin{lemma}
\label{lem:subset}
    Let $X$ be a dominating set of $G$ and $Y$ be a child of $X$. 
    Then, $\cand{Y} \subset \cand{X}$. 
\end{lemma}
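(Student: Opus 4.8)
The plan is to peel off the two clauses in the definition of $\cand{\cdot}$ and transfer each of them from $Y$ to its parent $X$. First I would record the structural facts coming from ``$Y$ is a child of $X$'': by definition $\Parent{Y} = X$, so $Y = X \setminus \set{v}$ where $v = \pv{Y}$, and by Lemma~\ref{lem:enum} this vertex $v$ belongs to $\cand{X}$. I would also note the elementary fact that $\cand{Z} \subseteq Z$ for every dominating set $Z \neq V$: if $w \notin Z$ then $Z \setminus \set{w} = Z$ and $\Parent{Z} = Z \cup \set{\pv{Z}} \neq Z$, so $w$ violates the parent clause. In particular $\cand{Y} \subseteq Y$, and since $v \notin Y$ we get $v \in \cand{X} \setminus \cand{Y}$; thus once the inclusion $\cand{Y} \subseteq \cand{X}$ is established, properness is automatic.

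For the inclusion I would fix an arbitrary $u \in \cand{Y}$ and verify the two conditions defining $\cand{X}$. The parent clause $\Parent{Y \setminus \set{u}} = Y$ is equivalent to $\pv{Y \setminus \set{u}} = u$, which says precisely that $u \in Y$ and that every vertex of index smaller than $u$ lies in $Y$. Because $Y \subseteq X$, all those smaller-index vertices also lie in $X \setminus \set{u}$, while $u$ itself is absent from $X \setminus \set{u}$; hence $u$ is again the minimum-index vertex missing from $X \setminus \set{u}$, i.e. $\pv{X \setminus \set{u}} = u$ and $\Parent{X \setminus \set{u}} = X$. For the domination clause I would use monotonicity of the closed neighbourhood: from $Y \setminus \set{u} \subseteq X \setminus \set{u}$ it follows that $V = N[Y \setminus \set{u}] \subseteq N[X \setminus \set{u}] \subseteq V$, so $N[X \setminus \set{u}] = V$. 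Both clauses hold, so $u \in \cand{X}$, proving the inclusion.

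The domination clause and the properness step are routine (pure monotonicity and the witness $v$, respectively). The step needing the most care is the parent clause: I must argue precisely that enlarging $Y$ to the superset $X$ does not create a missing vertex of index below $u$. This is exactly where the index bookkeeping enforced by $u \in \cand{Y}$ is used, namely the fact that membership of $u$ in $\cand{Y}$ already pins every vertex below $u$ inside $Y$; here I would be careful not to conflate ``missing from $Y$'' with ``missing from $X$''.
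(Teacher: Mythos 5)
Your proof is correct and takes essentially the same route as the paper's: fix an arbitrary element of $\cand{Y}$ (the paper writes it as $\pv{Z}$ for a child $Z$ of $Y$) and transfer the two defining conditions from $Y$ to $X$ using $Y \subseteq X$, the key point in both arguments being that every vertex of index below the chosen candidate already lies in $Y$ and hence in $X$. You are in fact somewhat more thorough than the paper, which leaves the domination clause and the properness of the inclusion implicit, whereas you verify the former by monotonicity of $N[\cdot]$ and settle the latter with the explicit witness $\pv{Y} \in \cand{X} \setminus \cand{Y}$.
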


\begin{proof}
Let $Z$ be a child of $Y$. 
Hence, $\pv{Z} \in X$ and $\pv{Z} \in \cand{Y}$.  
From the definition of $\pv{Z}$, 
$\pv{Z} = \min{V \setminus Z}$. 
Moreover,  since $V \setminus X \subset V \setminus Z$, 
$\pv{Z} \le \min{V\setminus X}$. 
Therefore, 
$\pv{Z} \in \cand{X}$. 
\end{proof}

From the Lemma~\ref{lem:subset}, 
for any $v \in \cand{X}$, 
what we need to obtain the candidate set of $Y$ is to compute $\Del{X}{\pv{Y}} = \cand{X}\setminus\cand{Y}$,
where $Y = X \setminus \set{v}$. 
In addition, we can easily sort $\cand{Y}$ by the degeneracy ordering if $\cand{X}$ is sorted. 
In what follows, we denote by  
    $\Del[1]{X}{v} = \inset{u \in \cand{X}^{<v}}{N[u] \cap X = \set{u, v}}$, 
    $\Del[2]{X}{v} = \inset{u \in \cand{X}^{<v}}{\exists w \in V\setminus (X \setminus \set{v}) (N[w] \cap X = \set{u, v})}$, and 
    $\Del[3]{X}{v} = \cand{X}^{v \le}$. 
Next, 
    we show the time complexity for obtaining $\Del{X}{\pv{Y}}$. 

\begin{lemma}
\label{lem:del}
    For each $v \in \cand{X}$,  
    $\Del{X}{v} = \Del[1]{X}{v} \cup \Del[2]{X}{v} \cup \Del[3]{X}{v}$ holds.  
\end{lemma}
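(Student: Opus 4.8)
The plan is to unfold the identity $\Del{X}{v} = \cand{X}\setminus\cand{Y}$ (with $Y = X\setminus\set{v}$, so that $\pv{Y}=v$) and to sort each lost candidate $u\in\cand{X}\setminus\cand{Y}$ according to which of the two defining conditions of $\cand{Y}$ it violates: the \emph{parent condition} $\Parent{Y\setminus\set{u}}=Y$ or the \emph{domination condition} $N[Y\setminus\set{u}]=V$. I expect $\Del[3]$ to collect exactly the parent‑condition failures and $\Del[1]\cup\Del[2]$ to collect exactly the domination failures, so the proof reduces to verifying these two correspondences and then taking the union.

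First I would pin down $\pv{Y}$. Since $v\in\cand{X}$, the parent condition for $v$ forces $v<\pv{X}=\min(V\setminus X)$, whence $\pv{Y}=\min\bigl((V\setminus X)\cup\set{v}\bigr)=v$. A candidate $u$ meets the parent condition for $Y$ precisely when $u<\pv{Y}=v$; more generally $u\in\cand{X}$ always satisfies $u<\pv{X}$. Hence every candidate of index at least $v$ is dropped from $\cand{Y}$, while every surviving candidate has index below $v$ and then automatically satisfies the parent condition for $Y$. This identifies the parent‑condition failures with $\Del[3]{X}{v}=\cand{X}^{v\le}$ and shows $\cand{Y}\subseteq\cand{X}^{<v}$, consistent with Lemma~\ref{lem:subset}.

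It then remains, for $u\in\cand{X}^{<v}$, to characterise when the domination condition fails, i.e. when $N[X\setminus\set{u,v}]\neq V$. The core claim is that this holds iff there is a witness $w$ with $N[w]\cap X=\set{u,v}$. For the forward direction I would take an undominated $w$, giving $N[w]\cap X\subseteq\set{u,v}$, and then use $u\in\cand{X}$ to exclude $N[w]\cap X\in\{\emptyset,\set{u}\}$ and $v\in\cand{X}$ to exclude $N[w]\cap X=\set{v}$, leaving $N[w]\cap X=\set{u,v}$; the reverse direction is immediate since such a $w$ is undominated in $X\setminus\set{u,v}$. Finally I would classify the witness: any $w\in X$ must lie in $\set{u,v}$, so the witnesses split as $w=u$ (yielding $N[u]\cap X=\set{u,v}$, i.e. $u\in\Del[1]{X}{v}$) or $w\in(V\setminus X)\cup\set{v}=V\setminus(X\setminus\set{v})$ (yielding $u\in\Del[2]{X}{v}$). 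Taking the union over all such $u$ gives $\Del[1]\cup\Del[2]$, and combining with $\Del[3]$ yields the claimed equality.

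The main obstacle I anticipate is the witness case analysis in the last step: ruling out the degenerate possibilities $N[w]\cap X=\set{u}$ and $N[w]\cap X=\set{v}$ (which contradict $u\in\cand{X}$ and $v\in\cand{X}$ respectively), and assigning each remaining witness to the correct set. In particular one must notice that the case $w=u$ feeds $\Del[1]$ but \emph{not} $\Del[2]$, because $u\in X\setminus\set{v}$ is excluded from the quantifier range $V\setminus(X\setminus\set{v})$ of $\Del[2]$; this is exactly why both $\Del[1]$ and $\Del[2]$ are needed and neither is redundant.
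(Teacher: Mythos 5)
Your proposal is correct and takes essentially the same approach as the paper: lost candidates with index at least $v$ are exactly the parent-condition failures (giving $\Del[3]{X}{v}$), and lost candidates below $v$ are exactly the domination failures, witnessed by a vertex $w$ with $N[w]\cap X = \set{u,v}$, which is then sorted into $\Del[1]{X}{v}$ (the case $w=u$) or $\Del[2]{X}{v}$ (the case $w \in V\setminus(X\setminus\set{v})$). Your write-up is in fact more explicit than the paper's own proof at the final step, where the paper compresses the exclusion of the degenerate cases $N[w]\cap X \in \set{\emptyset, \set{u}, \set{v}}$ and the witness classification into a single sentence.
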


\begin{proof}
    $\Del{X}{v} \supseteq \Del[1]{X}{v} \cup \Del[2]{X}{v} \cup \Del[3]{X}{v}$ is trivial 
    since $X \setminus\set{u, v}$ is not dominating set for each $u \in \Del[1]{X}{v} \cup \Del[2]{X}{v}$ 
    and the parent of $X \setminus \set{u, v}$ is not $X\setminus \set{v}$ for each $u \in \Del[3]{X}{v}$. 
    We next prove $\Del{X}{v} \subseteq \Del[1]{X}{v} \cup \Del[2]{X}{v} \cup \Del[3]{X}{v}$. 
    Let $u$ be a vertex in $\Del{X}{v}$. 
    Suppose that $X\setminus\set{u, v}$ is a dominating set. 
    Since $\Parent{X\setminus\set{u, v}} \neq X \setminus \set{v}$, $v < u$. 
    Thus, $u \in \Del[3]{X}{v}$. 
    Suppose that $X\setminus\set{u, v}$ is not a dominating set, 
    that is,  $N[X \setminus\set{u, v}] \neq V$. 
    This implies that there exists a vertex $w$ in $V$ 
        such that $w$ is not dominated by any vertex in $X \setminus\set{u, v}$.  
    Note that $w$ may be equal to $u$ or $v$. 
    Hence, $N[w] \cap X = \set{u, v}$ and the statement holds. 
\end{proof}

\begin{figure}
    \centering
    \includegraphics[width=0.7\textwidth]{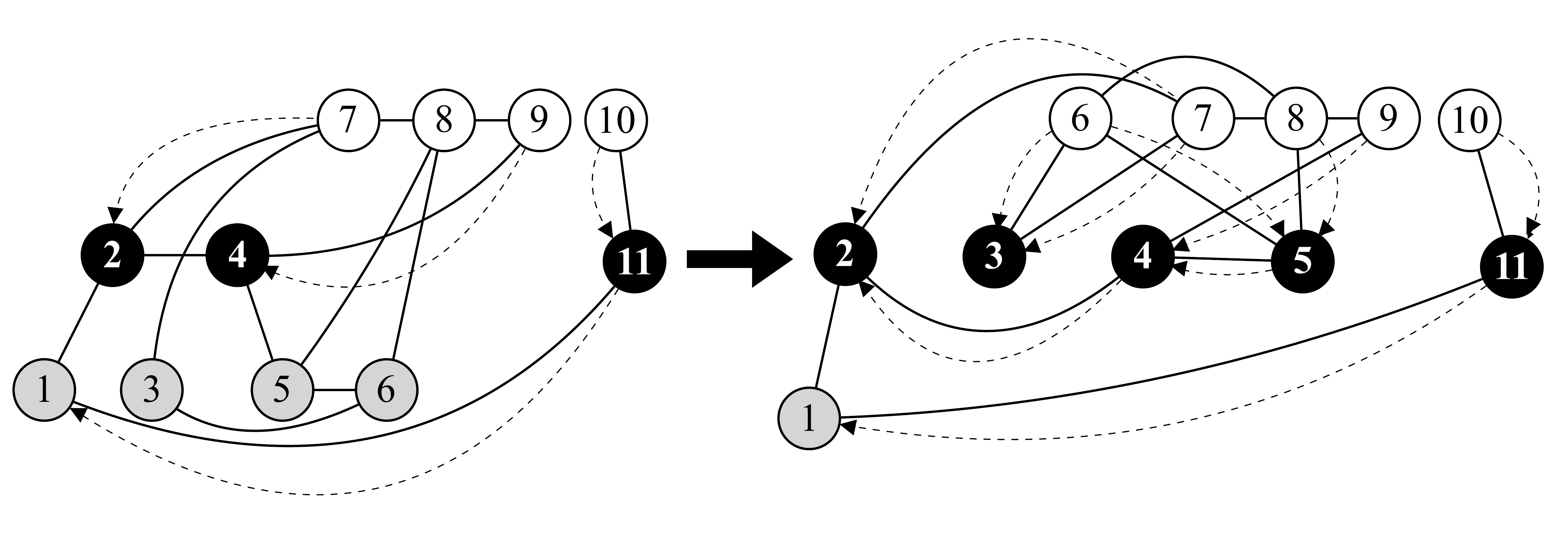}
    \caption{
    Let $X$ be a dominating set $\set{1,2,3,4,5,6,11}$. 
    An example of the maintenance of $\cand{X}$ and $\sig D(X)$. 
    Each dashed directed edge is stored in $\sig D(X)$, and
    each solid edge is an edge in $G$. 
    A directed edge $(u, v)$ implies $v \in D_u(X)$. 
    The index of each vertex is according to a degeneracy ordering.
    White, black, and gray vertices belong to $V\setminus X$, $X\setminus\cand{X}$, and $\cand{X}$, respectively. 
    When \texttt{EDS-D} removes vertex $6$, $\cand{X\setminus\set{6}} = \set{1}$.
        }
    \label{fig:edsd}
\end{figure}

We show an example of dominated list and a maintenance of $\cand{X}$ in Fig.~\ref{fig:edsd}. 
To compute a candidate set efficiently,
for each vertex $u$ in $V$,  
we maintain the vertex lists $D_u(X)$ for $X$. 
We call $D_u(X)$ the \name{dominated list} of $u$ for $X$. 
The definition of $D_u(X)$ is as follows: 
If $u \in V \setminus X$, then $D_u(X) =  N(u) \cap (X \setminus \cand{X})$. 
If $u \in X$, then $D_u(X) = N(u)^{<u} \cap (X \setminus \cand{X})$.
For brevity, we write $D_u$ as $D_u(X)$ if no confusion arises. 
We denote by $\sig D(X) = \bigcup_{u \in V} \set{D_u}$. 
By using $\sig D(X)$, we can efficiently find $\Del[1]{X}{v}$ and $\Del[2]{X}{v}$. 

\begin{lemma}
\label{lem:prepro}
    For each vertex $v \in \cand{X}$, 
    we can compute $N(v) \cap \cand{X}$ and $N(v)^{v<} \cap X$ in $\order{k}$ time on average over all children of $X$. 
\end{lemma}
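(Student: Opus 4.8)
The plan is to compute, for every $v \in \cand{X}$ simultaneously, the two sets $N(v) \cap \cand{X}$ and $N(v)^{v<} \cap X$, amortizing the work across all children of $X$ instead of trying to bound it separately for a single $v$. Throughout I would keep characteristic (boolean) arrays for $X$ and for $\cand{X}$ so that a membership query costs $\order{1}$, alongside the sorted adjacency lists $N[v]$ already assumed by the algorithm; all of this fits in $\order{n+m}$ space. Recall that $X$ has exactly $\size{\cand{X}}$ children by Lemma~\ref{lem:enum}, so the target is a total running time of $\order{k\,\size{\cand{X}}}$.

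The set $N(v)^{v<} \cap X$ is the easy piece: the degeneracy ordering guarantees $\size{N(v)^{v<}} \le k$, so I would list the larger neighbors of $v$ in $\order{k}$ time and retain those lying in $X$ via the membership array, giving $\order{k}$ time for each individual $v$. The same scan, testing membership in $\cand{X}$ instead, computes $N(v)^{v<} \cap \cand{X}$ within the same bound. The genuinely problematic piece is the complementary half $N(v)^{<v} \cap \cand{X}$, the \emph{smaller} candidate-neighbors of $v$: a direct scan of $v$'s neighborhood costs $\order{d(v)}$, which need not be $\order{k}$.

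To get around this I would reverse the direction of the computation and charge each edge to its smaller endpoint. After initializing an empty output list for every $v \in \cand{X}$, I would iterate over all $u \in \cand{X}$, and for each larger neighbor $w \in N(u)^{u<}$ test whether $w \in \cand{X}$; if so, append $u$ to the list of $w$. When this pass finishes, the list attached to each $v$ is exactly $N(v)^{<v} \cap \cand{X}$, because $u$ is recorded for $v=w$ precisely when $u < w$, the vertices $u$ and $w$ are adjacent, and both lie in $\cand{X}$. Each $u$ contributes at most $\size{N(u)^{u<}} \le k$ units of work, so the whole pass costs $\order{\sum_{u \in \cand{X}} \size{N(u)^{u<}}} = \order{k\,\size{\cand{X}}}$. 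Taking the union with $N(v)^{v<} \cap \cand{X}$ yields $N(v) \cap \cand{X}$ for every $v$ within the same budget, i.e.\ $\order{k}$ per child on average.

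The main obstacle is exactly the asymmetry this argument exploits: there is no per-$v$ bound on the smaller-neighbor scan, so the statement can only hold in an amortized sense, and I would stress in the write-up that the averaging is over the children $Y = X \setminus \set{v}$. The degeneracy ordering is what makes the amortization go through, since the standard bound $\sum_{u} \size{N(u)^{u<}} \le kn$ restricts to $\sum_{u \in \cand{X}} \size{N(u)^{u<}} \le k\,\size{\cand{X}}$ once we confine attention to candidates; this is the single inequality the proof really turns on.
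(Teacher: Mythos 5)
Your proposal is correct and follows essentially the same route as the paper: the paper likewise handles $N(v)^{v<}\cap X$ directly via the $\le k$ larger neighbors, and obtains the smaller candidate-neighbors by scanning $\cand{X}$ in order and charging the work to the (at most $k\size{\cand{X}}$) edges of $G[\cand{X}]$, which is exactly your charge-each-edge-to-its-smaller-endpoint pass. Your write-up just makes the membership tests and the amortization over the $\size{\cand{X}}$ children more explicit than the paper does.
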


\begin{proof}
    Since $G$ is $k$-degenerate, 
    $G[\cand{X}]$ is also $k$-degenerate. 
    Thus, the number of edges in $G[\cand{X}]$ is at most $k\size{\cand{X}}$. 
    Remind that $\cand{X}$ is sorted by the degeneracy ordering. 
    Hence,  by scanning vertices of $\cand{X}$ from the smallest vertex to the largest one, 
    for each $v$ in $\cand{X}$, 
    we can obtain $N(v) \cap \cand{X}$ in $\order{k}$ time on average over all children of $X$. 
    Since $N(v)^{v<}$ is the larger $v$'s neighbors set, 
    the size is at most $k$. 
    Hence, the statement holds.
\end{proof}

\begin{lemma}
\label{lem:time1}
    Let $X$ be a dominating set of $G$. 
    Suppose that for each vertex $u$ in $G$, 
        we can obtain the size of $D_u$ in constant time. 
    Then, for each vertex $v \in \cand{X}$,
    we can compute $\Del[1]{X}{v}$ in $\order{k}$ time on average over all children of $X$. 
\end{lemma}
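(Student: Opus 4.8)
The plan is to give a combinatorial characterization of $\Del[1]{X}{v}$ and then show that every membership test collapses to an $\order{1}$ query on $\size{D_u}$ together with a scan whose cost is controlled by the degeneracy. First I would note that $\cand{X} \subseteq X$, so a candidate $u$ lies in $\Del[1]{X}{v}$ exactly when $u < v$, $u$ is adjacent to $v$, and $v$ is the \emph{unique} neighbor of $u$ in $X$ (this is just $N[u] \cap X = \set{u, v}$ rewritten, using that $G$ has no self-loops). Splitting the neighbors of $u$ by the ordering, this is equivalent to $N(u)^{<u} \cap X = \emptyset$ together with $N(u)^{u<} \cap X = \set{v}$: the vertex $v$ must be the single larger $X$-neighbor of $u$, and $u$ must have no smaller $X$-neighbor at all.

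The crux is evaluating the smaller-neighbor condition $N(u)^{<u} \cap X = \emptyset$ without scanning all of $N(u)^{<u}$, whose size may be $\Omega(\Delta)$. Here I would use the disjoint decomposition
\[N(u)^{<u} \cap X = D_u \,\sqcup\, \bigl(N(u)^{<u} \cap \cand{X}\bigr),\]
which holds because, for $u \in X$, by definition $D_u = N(u)^{<u} \cap (X \setminus \cand{X})$. Hence $N(u)^{<u} \cap X = \emptyset$ iff $\size{D_u} = 0$ and $u$ has no smaller candidate neighbor. The first test is $\order{1}$ by the hypothesis of the lemma — this is precisely where the assumption on $\size{D_u}$ is used — and the second is answered from the candidate-neighbor information that Lemma~\ref{lem:prepro} already supplies.

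Concretely, I would first run the left-to-right scan of Lemma~\ref{lem:prepro} to obtain, for every $u \in \cand{X}$, the set $N(u) \cap \cand{X}$ (in particular whether $u$ has a smaller candidate neighbor), at total cost $\order{k\size{\cand{X}}}$. In the same pass, for each $u$ I would list its at most $k$ larger neighbors $N(u)^{u<}$, count those lying in $X$, and record the unique one as $v$ when the count is exactly one; this is $\order{k}$ per vertex, again $\order{k\size{\cand{X}}}$ in total. Every $u$ that passes both tests, namely $\size{D_u} = 0$ with no smaller candidate neighbor, and exactly one larger $X$-neighbor $v$, is then placed in a bucket indexed by that $v$. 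After the pass, $\Del[1]{X}{v}$ is read off directly as the bucket of $v$, and the buckets partition a subset of $\cand{X}$.

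Since each $u \in \cand{X}$ is processed once, the whole computation runs in $\order{k\size{\cand{X}}}$ time, which is $\order{k}$ amortized over the $\size{\cand{X}}$ children of $X$, as required. The main obstacle is the smaller-neighbor test $N(u)^{<u} \cap X = \emptyset$: evaluated naively it costs $\order{\Delta}$ and ruins the bound, while the larger-neighbor test, if repeated once per incident candidate edge rather than once per vertex, would contribute an extra factor of $k$. Collapsing the non-candidate smaller neighbors into the single quantity $\size{D_u}$, charging the candidate neighbors to the degeneracy-bounded scan of Lemma~\ref{lem:prepro}, and computing the larger $X$-neighbors once per vertex together keep every vertex's work at $\order{k}$.
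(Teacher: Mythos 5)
Your proof is correct and follows essentially the same route as the paper's: both reduce the smaller-neighbor test to the constant-time check $\size{D_u} = 0$ plus a scan of the edges of $G[\cand{X}]$ (at most $k\size{\cand{X}}$ of them by degeneracy), handle the larger neighbors via the $\order{k\size{\cand{X}}}$-time scan supplied by Lemma~\ref{lem:prepro}, and amortize the total over the $\size{\cand{X}}$ children of $X$. Your explicit membership characterization and bucketing of each $u$ by its unique larger $X$-neighbor $v$ is just a per-vertex reorganization of the paper's per-$v$ computation over $N(v) \cap \cand{X}$, not a different argument.
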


\begin{proof}
    Since every vertex $u$ in $\Del[1]{X}{v}$ is adjacent to $v$, 
        $\Del[1]{X}{v} \subseteq N(v) \cap \cand{X}$. 
    To compute $\Del[1]{X}{v}$, we need to check whether $N[u]\cap X = \set{u, v}$ or not.
    We first consider smaller neighbors of $u$. 
    Before computing $\Del[1]{X}{v}$ for every vertex $v$, 
        we record the size of $D_u$ of $u \in \cand{X}$ in $\order{\size{\cand{X}}}$ time. 
    $D_u =\emptyset$ if and only if there are no smaller neighbors of $u$ in $X^{<u}\setminus\cand{X}$. 
    Moreover, the number of edges in $G[\cand{X}]$ is at most $k\size{\cand{X}}$ from the definition of the degeneracy. 
    Thus, this part can be done in $\order{\sum_{v \in \cand{X}}\size{N(v)\cap\cand{X}}}$ total time and 
        in $\order{k}$ time per each vertex in $\cand{X}$. 
    We next consider larger neighbors. 
    Again, before computing $\Del[1]{X}{v}$ for every vertex $v$, 
    from Lemma~\ref{lem:prepro} and the degeneracy of $G$, 
        we can check all of the larger neighbors of $u \in \cand{X}$ in $\order{k\size{\cand{X}}}$ time. 
    Thus, as with the smaller case, 
        the checking for the larger part also can be done in $\order{k}$ time on average over all children of $X$. 
    Hence, the statement holds. 
\end{proof}

\begin{lemma}
\label{lem:time2}
    Suppose that for each vertex $w$ in $G$, 
        we can obtain the size of $D_w$ in constant time. 
    For each vertex $v \in \cand{X}$,
        we can compute $\Del[2]{X}{v}$ in $\order{k}$ time on average over all children of $X$. 
\end{lemma}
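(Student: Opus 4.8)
The plan is to turn the existential condition defining $\Del[2]{X}{v}$ into a bounded, degeneracy-friendly search over witnesses. Recall that $u \in \Del[2]{X}{v}$ exactly when some $w \in \{v\} \cup (V \setminus X)$ satisfies $N[w] \cap X = \{u, v\}$; since $u, v \in \cand{X}$, such a witness has no $X$-neighbour outside $\cand{X}$ and has exactly two candidate neighbours. The crucial structural observation I would use is that every candidate precedes every non-solution vertex in the ordering: if $z \in \cand{X}$, then $\Parent{X \setminus \{z\}} = X$ forces $z = \min((V \setminus X) \cup \{z\})$, hence $z < \min(V \setminus X)$. Consequently a witness $w \in V \setminus X$ must satisfy $w > v$, so it is one of the at most $k$ larger neighbours in $N(v)^{v<}$; the only other admissible witness is $w = v$ itself. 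This bounds the witnesses inspected for each $v$ by $k + 1$ and, summing over the $\size{\cand{X}}$ children, already gives $\sum_{v} \size{\Del[2]{X}{v}} = \order{k\size{\cand{X}}}$, so that even writing the output stays within budget.

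Next I would test each larger-neighbour witness in constant time. For $w \in N(v)^{v<} \setminus X$, the condition $N[w] \cap X = \{u, v\}$ is equivalent to $\size{D_w} = 0$ (all $X$-neighbours of $w$ are candidates, using the hypothesis that $\size{D_w}$ is available in constant time) together with $\size{N(w) \cap \cand{X}} = 2$. To supply the second quantity without scanning the possibly high-degree $w$, I would exploit the ordering fact again: every candidate neighbour of $w$ lies below $w$, so it is reached from the candidate side. In a single $\order{k\size{\cand{X}}}$ sweep over all candidates $z$ and their larger neighbours $N(z)^{z<}$ (at most $k$ each), I would, for each non-solution $w$ encountered, increment a counter $\mathit{cnt}[w]$ and append $z$ to a list $L[w]$ truncated at length two. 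After this preprocessing $\mathit{cnt}[w] = \size{N(w) \cap \cand{X}}$, and $L[w]$ records the candidate neighbours whenever there are at most two. Then, for each $v$, I would loop over its at most $k$ larger non-solution neighbours $w$ and add the non-$v$ element of $L[w]$ to $\Del[2]{X}{v}$ whenever $\size{D_w} = 0$ and $\mathit{cnt}[w] = 2$; each such step costs $\order{1}$, for $\order{k}$ per child.

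Finally, the witness $w = v$ requires $N[v] \cap X = \{u, v\}$, i.e. $v$ has a unique $X$-neighbour which is a candidate $u < v$. I would decide this by combining $\size{D_v}$ with a scan of the at most $k$ larger neighbours of $v$ and with $N(v) \cap \cand{X}$, the latter computable in $\order{k}$ amortised time by Lemma~\ref{lem:prepro}; any resulting $u > v$ is irrelevant since it is deleted anyway through $\Del[3]{X}{v}$. Adding this per-$v$ test to the global sweep yields the claimed $\order{k}$ amortised bound over all children of $X$. The main obstacle is precisely the one resolved above: a witness may have large degree, so $\size{N(w) \cap \cand{X}}$ cannot be obtained by inspecting $w$'s neighbourhood directly; the whole argument hinges on the ordering property that forces every candidate neighbour below $w$, which lets the counters be populated from the candidate side within the degeneracy budget.
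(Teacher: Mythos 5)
Your proposal is correct and takes essentially the same route as the paper's proof: both hinge on the ordering fact that every candidate precedes every vertex of $V \setminus X$ (derived from $\pv{X\setminus\set{z}} = z$), so witnesses lie among the larger neighbours of candidates and can be enumerated by one $\order{k\size{\cand{X}}}$ sweep from the candidate side, after which the constant-time $\size{D_w}$ test rules out witnesses with neighbours in $X \setminus \cand{X}$, and the total cost is amortized over the $\size{\cand{X}}$ children. Your explicit counters $\mathit{cnt}[w]$ and truncated lists $L[w]$, and your separate treatment of the witness $w = v$, are just a more detailed (and slightly more careful) rendering of what the paper compresses into prose.
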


\begin{proof}
    Let $u$ be a vertex in $\Del[2]{X}{v}$. 
    Then, there exists a vertex $w$ 
        such that $N[w]\cap X = \set{u, v}$ and $w \in  N[v] \cap (V \setminus (X \setminus \set{v})$. 
    In addition, for any vertex $v'$ in $\cand{X}$, 
        $\pv{X \setminus \set{v'}} = v'$. 
    Thus, $v \le w$ and $u < w$ hold. 
    Before computing $\Del[2]{X}{v}$ for every vertex $v$, 
        by scanning all larger neighbors $w'$ of vertices of $\cand{X}$, 
            we can list such vertices $w'$ such that
                $w' > \max\set{\cand{X}}$, $\size{N[w'] \cap \cand{X}} = 2$, and $w' \in V \setminus (X \setminus \set{v})$ in $\order{k\size{\cand{X}}}$ time 
                    since $G$ is $k$-degenerate. 
    If $D_{w'} \neq \emptyset$, that is, $w'$ has a neighbor in $X\setminus\cand{X}$, 
        then $\size{N[w] \cap X} > 2$. 
    Thus, since we can check the size of $D_{w'}$ in constant time, 
        we can compute $\Del[2]{X}{v}$ in $\order{k}$ time on average over all children of $X$. 
\end{proof}

In Lemma~\ref{lem:time1} and Lemma~\ref{lem:time2}, 
we assume that the dominated lists were computed when we compute $\Del{X}{v}$ for each vertex $v$ in $\cand{X}$. 
We next consider how we maintain $\sig D$. 
Next lemmas show the transformation from $D_u(X)$ to $D_u(Y)$ for each vertex $u$ in $G$.

\newcommand{\XbC}{X_{\bar{C}}}
\begin{lemma}
\label{lem:adj1}
    Let $X$ be a dominating set, $v$ be a vertex in $\cand{X}$, and $Y = X \setminus \set{v}$.
    For each vertex $u \in G$ such that $u \neq v$, 
    $D_u(Y) = D_u(X) \cup (N(u)^{<u} \cap (\Del[1]{X}{v} \cup\Del[2]{X}{v})) \cup (N(u)^{<u}\cap(\Del[3]{X}{v}\setminus\set{v}))$. 
\end{lemma}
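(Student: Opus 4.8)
The plan is to reduce everything to a single set identity and then read off the dominated lists straight from their definition. Since $D_u(X)$ depends on $X$ only through the set $X \setminus \cand{X}$ (intersected with $N(u)$ when $u \notin X$, or with $N(u)^{<u}$ when $u \in X$), the whole lemma follows once I control how $Y \setminus \cand{Y}$ differs from $X \setminus \cand{X}$. Concretely, I aim first to prove
\[
Y \setminus \cand{Y} = (X \setminus \cand{X}) \cup \Del[1]{X}{v} \cup \Del[2]{X}{v} \cup (\Del[3]{X}{v} \setminus \set{v}),
\]
and then to intersect both sides with the appropriate neighbourhood of $u$.

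To establish this identity I note, via Lemma~\ref{lem:subset} and Lemma~\ref{lem:del}, that $\cand{Y} = \cand{X} \setminus \Del{X}{v}$ with $\Del{X}{v} = \Del[1]{X}{v} \cup \Del[2]{X}{v} \cup \Del[3]{X}{v}$, and that each $\Del[i]{X}{v}$ is a subset of $\cand{X}$, hence of $X$. Substituting $Y = X \setminus \set{v}$ and $\cand{Y} = \cand{X}\setminus\Del{X}{v}$ into $Y \setminus \cand{Y}$ and expanding the set difference is a short Boolean computation: a vertex survives exactly when it lies in $X \setminus \cand{X}$ or in $\Del{X}{v}$, with $v$ removed. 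Here I must track $v$ carefully: since $v \in \cand{X} \subseteq X$ we have $v \notin X \setminus \cand{X}$, and $v \in \Del[3]{X}{v}$ while $v \notin \Del[1]{X}{v} \cup \Del[2]{X}{v}$ (the latter two live strictly below $v$). This yields the displayed identity.

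Finally I split on the position of $u$, using that for $u \neq v$ the membership $u \in X \iff u \in Y$ is unchanged, so the same branch of the definition of $D_u$ applies to both $X$ and $Y$. If $u \in X$, the definition already restricts to $N(u)^{<u}$, so intersecting the identity with $N(u)^{<u}$ and distributing over the union gives the claimed formula at once, with the $X \setminus \cand{X}$ term reproducing $D_u(X)$. The delicate case, and where I expect the main obstacle, is $u \notin X$, where $D_u$ is defined with the \emph{full} neighbourhood $N(u)$ rather than $N(u)^{<u}$, yet the stated formula still uses $N(u)^{<u}$. The resolution is the observation that every candidate precedes every non-solution vertex: if $w \in \cand{X}$ then $\pv{X\setminus\set{w}} = w = \min(V\setminus(X\setminus\set{w}))$ forces $w < \min(V \setminus X)$, so any $u \in V \setminus X$ satisfies $w < u$. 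Because $\Del[1]{X}{v} \cup \Del[2]{X}{v} \cup (\Del[3]{X}{v}\setminus\set{v}) \subseteq \cand{X}$, every neighbour of $u$ drawn from these sets is automatically smaller than $u$, so $N(u)$ and $N(u)^{<u}$ select the same vertices there and the formula matches again, completing the argument.
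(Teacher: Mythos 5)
Your proposal is correct and takes essentially the same route as the paper's proof: both reduce the claim to the identity $Y \setminus \cand{Y} = (X \setminus \cand{X}) \cup (\Del{X}{v}\setminus\set{v})$, intersect with the appropriate neighbourhood and distribute, and split on $u \in X$ versus $u \notin X$, handling the latter case by the observation that every vertex of $\cand{X}$ precedes every vertex of $V \setminus X$, so $N(u)$ and $N(u)^{<u}$ agree on the $Del$ sets. The only difference is one of presentation: you explicitly derive the identity and the ordering fact, which the paper asserts inline without proof.
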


\begin{proof}
    Let $\XbC= X\setminus\cand{X}$. 
    Suppose that $u \in Y$. 
    From the definition, $D_u(X) = N(u)^{<u} \cap\XbC$. 
    From the distributive property, 
    \begin{eqnarray*}
        L & = & D_u(X) \cup (N(u)^{<u} \cap (\Del[1]{X}{v} \cup\Del[2]{X}{v})) \cup (N(u)^{<u}\cap(\Del[3]{X}{v}\setminus\set{v})) \\
          & = & N(u)^{<u}\cap(\XbC\cup(\Del{X}{v}\setminus\set{v})) \\
          & = & N(u)^{<u}\cap(Y\setminus\cand{Y})
    \end{eqnarray*}
    Since $\XbC\cup(\Del{X}{v}\setminus\set{v}) = Y\setminus\cand{Y}$. 
    Suppose that $u \in V \setminus X$. 
    From the parent-child relation, $\pv{Y} < u$ holds. 
    Since $\Del{X}{v} \subseteq V^{<u}$, 
        $N(u)^{<u} \cap(\Del[1]{X}{v}\cup\Del[2]{X}{v}) = N(u)\cap(\Del[1]{X}{v}\cup\Del[2]{X}{v})$, and
        $N(u)^{<u} \cap(\Del[3]{X}{v}\setminus\set{v}) = N(u) \cap(\Del[3]{X}{v}\setminus\set{v})$.
    From the definition, $D_u(X) = N(u) \cap\XbC$, 
    \begin{eqnarray*}
        L & = & D_u(X) \cup (N(u)^{<u} \cap (\Del[1]{X}{v} \cup\Del[2]{X}{v})) \cup (N(u)^{<u}\cap(\Del[3]{X}{v}\setminus\set{v})) \\
          & = & (N(u) \cap \XbC) \cup (N(u)\cap(\Del[1]{X}{v}\cup\Del[2]{X}{v})) \cup (N(u)\cap(\Del[3]{X}{v}\setminus\set{v})) \\
          & = & N(u)\cap(\XbC \cup(\Del[1]{X}{v} \cup\Del[2]{X}{v})\cup(\Del[3]{X}{v}\setminus\set{v}))\\
          & = & N(u)\cap(\XbC\cup(\Del{X}{v}\setminus\set{v}))\\
          & = & N(u)\cap(Y\setminus\cand{Y})
    \end{eqnarray*}
    Hence, the statement holds. 
\end{proof}

\begin{lemma}
\label{lem:adj2}
    Let $X$ be a dominating set, $v$ be a vertex in $\cand{X}$, and  $Y = X \setminus \set{v}$.
    $D_v(Y) = D_v(X) \cup (N(v)^{<v} \cap (\Del[1]{X}{v}\cup\Del[2]{X}{v})) \cup (N(v)^{v<} \cap X)$.
\end{lemma}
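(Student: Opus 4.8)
The plan is to mirror the computation in Lemma~\ref{lem:adj1}, but to account for the fact that $v$ is the one vertex whose membership changes: since $v \in \cand{X} \subseteq X$ but $v \notin Y$, the dominated list of $v$ is governed by the ``$u \in X$'' clause for $X$ and by the ``$u \in V \setminus X$'' clause for $Y$. Concretely, $D_v(X) = N(v)^{<v} \cap \XbC$ while $D_v(Y) = N(v) \cap (Y \setminus \cand{Y})$, where $\XbC = X \setminus \cand{X}$. The extra term $N(v)^{v<} \cap X$ in the statement, absent from Lemma~\ref{lem:adj1}, is exactly the contribution of the larger neighbors of $v$, which the list for $v$ ignores while $v \in X$ but must include once $v$ leaves the solution.

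First I would reuse the identity $Y \setminus \cand{Y} = \XbC \cup (\Del{X}{v} \setminus \set{v})$ already established inside the proof of Lemma~\ref{lem:adj1}; it follows from $\cand{Y} = \cand{X} \setminus \Del{X}{v}$ (Lemma~\ref{lem:subset} together with $\Del{X}{v} = \cand{X} \setminus \cand{Y}$) and from $v \in \Del[3]{X}{v} \subseteq \Del{X}{v}$. Substituting the decomposition $\Del{X}{v} = \Del[1]{X}{v} \cup \Del[2]{X}{v} \cup \Del[3]{X}{v}$ of Lemma~\ref{lem:del} and removing $v$ yields $\Del{X}{v} \setminus \set{v} = \Del[1]{X}{v} \cup \Del[2]{X}{v} \cup (\Del[3]{X}{v} \setminus \set{v})$, using that $\Del[1]{X}{v}, \Del[2]{X}{v} \subseteq \cand{X}^{<v}$ and hence do not contain $v$.

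The key step is then to split $N(v) = N(v)^{<v} \cup N(v)^{v<}$ and intersect each part with $Y \setminus \cand{Y}$, exploiting the index ranges of the three $\Del$-sets. For the smaller neighbors $N(v)^{<v}$: since $\Del[3]{X}{v} \setminus \set{v} = \cand{X}^{v<}$ consists only of vertices larger than $v$, it meets $N(v)^{<v}$ in the empty set, so $N(v)^{<v} \cap (Y \setminus \cand{Y})$ collapses to $(N(v)^{<v} \cap \XbC) \cup (N(v)^{<v} \cap (\Del[1]{X}{v} \cup \Del[2]{X}{v}))$, i.e.\ to $D_v(X) \cup (N(v)^{<v} \cap (\Del[1]{X}{v} \cup \Del[2]{X}{v}))$. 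Symmetrically, for the larger neighbors $N(v)^{v<}$: since $\Del[1]{X}{v} \cup \Del[2]{X}{v} \subseteq \cand{X}^{<v}$ lies entirely below $v$, it contributes nothing, leaving $(N(v)^{v<} \cap \XbC) \cup (N(v)^{v<} \cap \cand{X}^{v<})$, which equals $N(v)^{v<} \cap (\XbC \cup \cand{X}) = N(v)^{v<} \cap X$. Taking the union of the two parts gives the claimed expression for $D_v(Y)$.

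The only delicate point --- and the step I would be most careful about --- is the bookkeeping of the index conditions: one must verify that the ``off-range'' pieces ($\Del[3]{X}{v} \setminus \set{v}$ against the smaller neighbors, and $\Del[1]{X}{v} \cup \Del[2]{X}{v}$ against the larger neighbors) really vanish, and that $v$ itself is correctly excluded throughout, since it is removed from $Y$ and sits in $\Del[3]{X}{v}$, hence does not appear in $Y \setminus \cand{Y}$. Everything else is the same distributive-law manipulation used in Lemma~\ref{lem:adj1}, so no new machinery is needed.
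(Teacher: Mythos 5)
Your proof is correct and takes essentially the same route as the paper's: both rest on the identity $Y\setminus\cand{Y} = (X\setminus\cand{X})\cup(\Del{X}{v}\setminus\set{v})$ carried over from Lemma~\ref{lem:adj1}, split $N(v)$ into $N(v)^{<v}$ and $N(v)^{v<}$, and use the index ranges of $\Del[1]{X}{v}\cup\Del[2]{X}{v}$ (below $v$) and $\Del[3]{X}{v}\setminus\set{v}$ (above $v$) to simplify each part. The only cosmetic difference is that the paper collapses the larger-neighbor term via $\cand{Y}^{v<}=\emptyset$ whereas you expand the decomposition of $Y\setminus\cand{Y}$ directly; the computations coincide.
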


\begin{proof}
    Since $\Del[1]{X}{v} \cup \Del[2]{X}{v} \subseteq V^{<v}$ and $\Del[3]{X}{v}\cap V^{<v} = \emptyset$,
    $N(v)^{<v} \cap (\Del[1]{X}{v}\cup\Del[2]{X}{v}) = N(v)^{<v}\cap\Del{X}{v}$. 
    By the same discussion as Lemma~\ref{lem:adj1},
    $L = D_v(X) \cup (N(v)^{<v} \cap \Del{X}{v})= N(v)^{<v} \cap(Y\setminus\cand{Y})$.
    Since $Y=X\setminus\set{v}$, $N(v)^{v<}\cap Y = N(v)^{v<}\cap X$. 
    Moreover, since $X^{<v} = Y^{<v}$ and $\cand{Y}^{v<} = \emptyset$, 
    $N(v)^{v<} \cap(Y\setminus\cand{Y}) = N(v)^{v<} \cap X$. 
    Since $L = (N(v)^{v<} \cup N(v)^{<v})\cap(Y\setminus\cand{Y}) = D_v(Y)$, 
    the statement holds.
\end{proof}


We next consider the time complexity for obtaining the dominated lists for children of $X$. 
From Lemma~\ref{lem:adj1} and Lemma~\ref{lem:adj2}, 
    a na\"{i}ve method for the computation needs $\order{k\size{\Del{X}{v}} + k}$ time for each vertex $v$ of $X$ 
    since we can list all larger neighbors of any vertex in $\order{k}$ time. 
However, 
if we already know $\cand{W}$ and $\sig D(W)$ for a child $W$ of $X$, 
    then we can easily obtain $\sig D(Y)$, where $Y$ is the child of $X$ immediately after $W$. 
The next lemma plays a key role in \EnumDSD. 
Here, for any two sets $A, B$, 
    we denote by $A \oplus B = (A \setminus B) \cup (B \setminus A)$. 

\begin{lemma}
\label{lem:compadj}
    Let $X$ be a dominating set, 
    $v, u$ be vertices in $\cand{X}$ such that $u$ has the maximum index in $\cand{X}^{<v}$, 
    $Y = X\setminus\set{u}$, and $W = X \setminus \set{v}$. 
    Suppose that we already know $\cand{Y}\oplus\cand{W}$, $\sig D(W)$, $\Del{X}{v}$, and $\Del{X}{u}$. 
    Then, we can compute $\sig D(Y)$ in $\order{k\size{\cand{Y}\oplus\cand{W}} + k}$ time.  
\end{lemma}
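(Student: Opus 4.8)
The plan is to compute $\sig D(Y)$ as an \emph{incremental update} of the already-known $\sig D(W)$, touching only those dominated lists that actually differ between the two children. First I would pin down which lists change. Both $\sig D(W)$ and $\sig D(Y)$ are obtained from $\sig D(X)$ by the corrections described in Lemmas~\ref{lem:adj1} and~\ref{lem:adj2}, so a list $D_w$ can differ between $\sig D(W)$ and $\sig D(Y)$ only when $w$ is affected by the difference of the two committed sets $Y \setminus \cand{Y}$ and $W \setminus \cand{W}$. Using $\cand{Y} = \cand{X} \setminus \Del{X}{u}$ and $\cand{W} = \cand{X} \setminus \Del{X}{v}$ (together with Lemma~\ref{lem:del}), I would establish the two set identities $\cand{Y} \oplus \cand{W} = \Del{X}{u} \oplus \Del{X}{v}$ and $(Y \setminus \cand{Y}) \oplus (W \setminus \cand{W}) = (\Del{X}{u} \setminus \set{u}) \oplus (\Del{X}{v} \setminus \set{v})$, and then check, element by element for the two distinguished vertices, that the latter set is contained in $(\cand{Y} \oplus \cand{W}) \cup \set{u, v}$ (here I use $u \in \Del{X}{u}$, $v \in \Del{X}{v}$, and $v \in \Del[3]{X}{u}$, which hold because $u < v$ and $u$ is the predecessor of $v$ in $\cand{X}$). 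Thus the committed-status changes are localized to $\cand{Y} \oplus \cand{W}$ up to the two boundary vertices $u$ and $v$.

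The second ingredient is locality of propagation: a single changed candidate should only force updates to $\order{k}$ many lists. Here I would reuse the structural fact underlying Lemma~\ref{lem:subset} and Lemmas~\ref{lem:adj1} and~\ref{lem:adj2}, namely that every candidate is smaller than $\min(V \setminus X)$, hence smaller than every vertex outside $X$. Combined with the directed definition of $D_w$ (smaller neighbors only when $w \in X$, all neighbors when $w \notin X$), this shows that if a candidate $z$ enters or leaves the committed set, the only lists that gain or lose $z$ are those of its \emph{larger} neighbors $w \in N(z)^{z<}$: for $w \in X$ this is immediate from the smaller-neighbor rule, and for $w \notin X$ it follows because $z$, being a candidate, is smaller than $w$. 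Since $\size{N(z)^{z<}} \le k$ in the degeneracy ordering, each changed candidate triggers $\order{k}$ work.

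With these two facts the update procedure and its running time follow. For each $z \in \cand{Y} \oplus \cand{W}$ I would decide, from the supplied symmetric difference and the membership data in $\sig D(W)$, whether $z$ joins or leaves the committed set, and then add or delete $z$ in $D_w$ for every $w \in N(z)^{z<}$; this costs $\order{k}$ per vertex and $\order{k\size{\cand{Y} \oplus \cand{W}}}$ in total. Finally I would repair the two boundary lists $D_u(Y)$ and $D_v(Y)$ whose defining rule flips because $u$ leaves and $v$ re-enters the vertex set in passing from $W$ to $Y$; each recomputation scans only $N(u)^{u<}$, respectively $N(v)^{v<}$, via Lemma~\ref{lem:adj2}, and so costs $\order{k}$, contributing the additive $+k$. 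Summing gives the claimed $\order{k\size{\cand{Y} \oplus \cand{W}} + k}$ bound.

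I expect the main obstacle to be the localization argument of the first two paragraphs: proving that the set of vertices whose committed status changes is captured by $\cand{Y} \oplus \cand{W}$ up to $\set{u, v}$, and---more delicately---that each such change propagates only along the larger-neighbor direction, so that the cost is $\order{k}$ rather than $\order{\Delta}$ per change. The asymmetry in the definition of $D_w$ between $w \in X$ and $w \notin X$ is what makes the larger-neighbor claim non-obvious, and correctly handling the two boundary vertices $u$ and $v$, whose membership in the vertex set flips, is the other delicate point.
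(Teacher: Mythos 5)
Your proposal is correct and takes essentially the same route as the paper's proof: the paper likewise reduces the work to the symmetric difference $\cand{W}\oplus\cand{Y} = \Del{X}{v}\oplus\Del{X}{u}$, updates, for each vertex in that difference, the dominated lists of its at most $k$ larger neighbors (deciding add/remove from the known data), and then spends $\order{k}$ extra time on the boundary: fixing $D_u(Y)$ and $D_v(Y)$, whose defining rule flips, and propagating $v$'s re-entry into the committed set to the lists of $v$'s larger neighbors. If anything, your localization identities and your justification that only larger neighbors' lists can change (because every candidate precedes every vertex outside $X$) are spelled out more explicitly than in the paper.
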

\begin{proof}
    Suppose that $z$ is a vertex in $G$ such that $z \neq v$ and $z \neq u$. 
    From the definition,
    $D_z(W) \setminus D_z(Y) = (\Del{X}{v} \setminus \Del{X}{u}) \cap N(z)^{<z}$ and 
    $D_z(Y) \setminus D_z(W) = (\Del{X}{u} \setminus \Del{X}{v}) \cap N(z)^{<z}$. 
    Hence, 
    we first compute $\Del{X}{v} \oplus \Del{X}{u}$. 
    Now, $(\cand{X} \setminus \cand{W}) \oplus (\cand{X} \setminus \cand{Y}) 
        = \cand{W} \oplus \cand{Y}$. 
    Next, 
    for each vertex $c$ in $\cand{W} \oplus \cand{Y}$, 
    we check whether we add to or remove $c$ from $D_z(Y)$ or not. 
    Note that added or removed vertices from $D_z(Y)$ is a smaller neighbor of $z$. 
    From the definition, 
    if $c \notin D_z(Y)$ or $c \in D_z(X)$, then we add $c$ to $D_z(Y)$.
    Otherwise, we remove $c$ from $D_z(Y)$. 
    Thus, since each vertex in $\cand{W} \oplus \cand{Y}$ has at most $k$ larger neighbors, 
    for all vertices other than $u$ and $v$, 
    we can compute the all dominated lists in $\order{k\size{\cand{W} \oplus \cand{Y}}}$ time. 
    Next we consider the update for $D_u(Y)$ and $D_v(Y)$. 
    Note that from the definition, 
    $D_v(W)$ and $D_u(Y)$ contain larger neighbors of $v$ and $u$, respectively. 
    However, the number of such neighbors is $\order{k}$. 
    Finally, 
    since $v$ belongs to $Y$, 
    $v \in D_{u'}(Z)$ if $u' \in N(v)^{v<}$ for any vertex $u'$. 
    Thus, 
    as with the above discussion, 
    we can compute  $D_u(Y)$ and $D_v(Y)$ in $\order{k\size{\cand{W} \oplus \cand{Y}} + k}$ time. 
\end{proof}


\begin{lemma}
\label{lem:update}
    Let $X$ be a dominating set. 
    Then, \AllChildren{$X, \cand{X}, \sig D(X)$} of \EnumDSD  other than recursive calls can be done 
    in $\order{k\size{ch(X)} + k\size{gch(X)}}$ time. 
\end{lemma}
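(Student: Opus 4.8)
The plan is to account separately for the work done in the body of \AllChildren{$X, \cand{X}, \sig D(X)$} outside the recursive calls, namely: outputting $X$, the one-time preprocessing, and, for each $v \in \cand{X}$ processed in the main loop, the computation of the child's candidate set $\cand{Y}$ (via \CandD) and of its dominated list $\sig D(Y)$ (via \DomList), together with the constant-size bookkeeping that advances the pointers $C'$ and $\sig D'$. By Lemma~\ref{lem:enum} the children of $X$ correspond bijectively to $\cand{X}$, so the loop runs $\size{ch(X)}$ times and any contribution that is $\order{k}$ per iteration already sums to $\order{k\size{ch(X)}}$; outputting $X$ as the difference from its parent is $\order{1}$. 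I would first install the preprocessing of Lemma~\ref{lem:prepro}, computing $N(v)\cap\cand{X}$, $N(v)^{v<}\cap X$, and the sizes $\size{D_u}$ for all $u \in \cand{X}$ in $\order{k\size{ch(X)}}$ total time; this supplies the ``$\size{D_u}$ in constant time'' precondition required by Lemmas~\ref{lem:time1} and~\ref{lem:time2}.

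For the candidate sets, note that because the loop scans $\cand{X}$ in decreasing index order, the set $\Del[3]{X}{v} = \cand{X}^{v\le}$ is discharged incrementally: each iteration merely deletes the current $v$ from $C$, which is $\order{1}$ and sums to $\order{\size{ch(X)}}$. The remaining work inside \CandD is the computation of $\Del[1]{X}{v}$ and $\Del[2]{X}{v}$, which by Lemmas~\ref{lem:time1} and~\ref{lem:time2} costs $\order{k}$ amortized over all children, i.e.\ $\order{k\size{ch(X)}}$ in total. Combined with $\cand{Y} = \cand{X}\setminus\Del{X}{v}$ (Lemma~\ref{lem:del}) and $\cand{Y}\subset\cand{X}$ (Lemma~\ref{lem:subset}), this produces all candidate sets within the claimed budget.

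The crux is the maintenance of the dominated lists. Rather than rebuild $\sig D(Y)$ from $\sig D(X)$ for each child — which by Lemmas~\ref{lem:adj1} and~\ref{lem:adj2} would cost $\order{k\size{\Del{X}{v}}+k}$ per child and blow up, since $\sum_v \size{\Del[3]{X}{v}} = \Theta(\size{\cand{X}}^2)$ — I would compute $\sig D(Y)$ incrementally from the dominated list $\sig D(W)$ of the previously processed child $W$, exactly as in Lemma~\ref{lem:compadj}, at cost $\order{k\size{\cand{Y}\oplus\cand{W}}+k}$. The bookkeeping lines $C'\gets\cand{Y}$, $\sig D'\gets\sig D(Y)$, and the $\order{k}$ update of $D'_u$ over the larger neighbors $N(v)^{v<}$ ensure that the preconditions of Lemma~\ref{lem:compadj} (knowledge of $\sig D(W)$, $\cand{W}\oplus\cand{Y}$, and the two $\Del$ sets) are met at the start of each iteration; for the first child one takes $C'=\emptyset$. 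Summing the per-child bound, the additive $+k$ terms contribute $\order{k\size{ch(X)}}$, so it remains to bound $\sum \size{\cand{Y}\oplus\cand{W}}$ over consecutive children.

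Here the key observation — and the step I expect to be the main obstacle to state cleanly — is that, although the individual $\Del{X}{v}$ sets (and hence the candidate sets) are not monotone in $v$, the symmetric differences telescope against the grandchild count. Enumerating the children as $Y_1, \dots, Y_t$ in processing order, the subadditivity $\size{A\oplus B}\le\size{A}+\size{B}$ gives $\sum_{i=2}^{t}\size{\cand{Y_i}\oplus\cand{Y_{i-1}}}\le 2\sum_{i=1}^{t}\size{\cand{Y_i}}$, since each $\cand{Y_i}$ appears in at most two consecutive pairs. Because $\cand{Y_i}$ is precisely the index set of children of $Y_i$ (Lemma~\ref{lem:enum}) and the grandchildren of $X$ form the disjoint union $\bigcup_i ch(Y_i)$, we have $\sum_i\size{\cand{Y_i}} = \size{gch(X)}$, whence $\sum \size{\cand{Y}\oplus\cand{W}} \le 2\size{gch(X)}$. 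Multiplying by $k$ and adding the $\order{k\size{ch(X)}}$ terms from the previous steps yields the claimed $\order{k\size{ch(X)} + k\size{gch(X)}}$ bound. The delicate point to write carefully is that reusing $\sig D(W)$ in place still reconstructs $\sig D(Y)$ correctly, i.e.\ that the per-iteration bookkeeping preserves the invariant feeding Lemma~\ref{lem:compadj}.
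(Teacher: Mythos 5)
Your proposal is correct and follows essentially the same route as the paper's own proof: it handles $\Del[3]{X}{v}$ in $\order{1}$ per iteration via the decreasing processing order, charges the computation of $\Del[1]{X}{v}$ and $\Del[2]{X}{v}$ to Lemmas~\ref{lem:time1} and~\ref{lem:time2}, computes each $\sig D(Y)$ incrementally from the previously processed child via Lemma~\ref{lem:compadj}, and amortizes $\sum\size{\cand{Y}\oplus\cand{W}}$ against $\size{gch(X)}$ exactly as the paper does (your subadditivity phrasing, with each $\cand{Y_i}$ counted in at most two consecutive pairs, is a cleaner rendering of the paper's charging scheme ``$\size{gch(X)}$ is at least the sum of $\size{\cand{W}\setminus\cand{Y}}-1$ over consecutive children''). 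The only ingredient the paper makes explicit that you only gesture at is the restoration of the pointer-shared structures $C'$ and $\sig D'$ when backtracking occurs, done by logging and undoing the update operations at the same asymptotic cost.
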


\begin{proof}
    We first consider the time complexity of \CandD. 
    From Lemma~\ref{lem:time1} and Lemma~\ref{lem:time2}, 
        \CandD correctly computes $\Del[1]{X}{v}$ and $\Del[2]{X}{v}$ in 
        from line~\ref{algo:dsd:candd:del1:loop:smaller:start}
        to line~\ref{algo:dsd:candd:del1:loop:smaller:end} and 
        from line~\ref{algo:dsd:candd:del1:loop:larger:start}
        to line~\ref{algo:dsd:candd:del1:loop:larger:end}, respectively. 
    For each loop from line~\ref{algo:dsd:loop}, 
        the algorithm picks the largest vertex in $C$. 
    This can be done in $\order{1}$ since $C$ is sorted. 
    The algorithm needs to remove vertices in $\Del[3]{X}{v}$. 
    This can be done in line~\ref{algo:dsd:get:C} and 
        in $\order{1}$ time since $v$ is the largest vertex. 
    Thus, for each vertex $v$ in $\cand{X}$, 
        $\cand{X\setminus\set{v}}$ can be obtained in $\order{k}$ time on average. 
    Hence, 
    for all vertices in $\cand{X}$, 
        the candidate sets can be computed in $\order{k\size{ch(X)}}$ time. 
    Next, we consider the time complexity of \DomList. 
    Before computing \DomList, 
    \EnumDSD already computed $\cand{Y}\oplus\cand{W}$, $\sig D(W)$, $\Del{X}{v}$, and $\Del{X}{v'}$. 
    Note that we can compute $\cand{Y} \oplus \cand{W}$ when we compute $\cand{Y}$ and $\cand{W}$. 
    Here, 
        $W$ is the previous dominating set, 
        $C'$ stores $\cand{W}$, and 
        $\sig D'$ stores $\sig D(W)$.  
    Thus, by using Lemma~\ref{lem:compadj}, 
    we can compute $\sig D(Y)$ in $\order{k\size{\cand{Y}\oplus\cand{W}} + k}$ time.  
    In addition, 
    for all vertices in $\cand{X}$, 
        the dominated lists can be computed in $\order{k\size{\cand{X}} + k\size{gch(X)}}$ time
        since $Y$ has at least $\size{\cand{W}\setminus\cand{Y}} - 1$ children and 
        $\size{gch(X)}$ is at least the sum of $\size{\cand{W}\setminus\cand{Y}} - 1$ over all $Y \in \inset{X \setminus \set{v}}{v \in \cand{X}}$ and the previous solution $W$ of $Y$. 
    When \EnumDSD copies data such as $\sig D$, 
        \EnumDSD only copies the pointer of these data. 
    By recording operations of each line, 
        \EnumDSD restores these data when backtracking happens. 
    These restoring can be done in the same time of the above update computation. 
\end{proof}

\begin{theorem}
  \label{theo:dsd:time}
  \EnumDSD enumerates all dominating sets in $\order{k}$ time per solution 
  in a $k$-degenerate graph by using $\order{n + m}$ space. 
\end{theorem}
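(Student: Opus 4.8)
The plan is to combine the correctness of the reverse-search skeleton (already established for \EnumDS via Theorem~\ref{theo:enum}) with the amortized running-time bound furnished by Lemma~\ref{lem:update}, and then to treat the space bound separately through an in-place-update-with-undo argument.

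First I would argue correctness. The control flow of \EnumDSD coincides with that of \EnumDS: it walks the family tree $\Fam{G}$ along the parent-child relation and outputs each node once. The only new ingredient is that \EnumDSD maintains $\cand{X}$ and the dominated lists $\sig D(X)$ incrementally instead of recomputing them, so I would verify that these are maintained correctly. By construction \CandD returns $\cand{X}\setminus(\Del[1]{X}{v}\cup\Del[2]{X}{v})$ after the loop has already stripped off $\Del[3]{X}{v}=\cand{X}^{v\le}$ in line~\ref{algo:dsd:get:C}; hence by Lemma~\ref{lem:del} it returns exactly $\cand{Y}$. Likewise \DomList turns $\sig D(W)$ into $\sig D(Y)$ correctly by Lemmas~\ref{lem:adj1}, \ref{lem:adj2}, and~\ref{lem:compadj}. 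Consequently \EnumDSD enumerates precisely $\Sol{G}$ without duplication, solving Problem~\ref{prob:main}.

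Next I would bound the running time by a global amortization over $\Fam{G}$. By Lemma~\ref{lem:update}, the work done inside \AllChildren{$X,\cand{X},\sig D(X)$} outside its recursive calls is $\order{k\size{ch(X)}+k\size{gch(X)}+1}$, the constant term covering the output at leaves. Writing $N=\size{\Sol{G}}$ and summing over all $N$ nodes, I would invoke three elementary facts: $\sum_X\size{ch(X)}=N-1$ because $\Fam{G}$ is a tree (Lemma~\ref{lem:tree}); $\sum_X\size{gch(X)}\le N$ because every solution is a grandchild of at most one node; and $\sum_X 1=N$. Hence the total enumeration time is $\order{kN}$, i.e. $\order{k}$ per solution, on top of the one-time $\order{n+m}$ preprocessing (the degeneracy ordering and the initial $\sig D$). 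Intuitively this charges each candidate-set computation to a child of $X$ and each dominated-list update to a grandchild of $X$, and each node is billed as a child and as a grandchild at most once.

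Finally I would establish the $\order{n+m}$ space bound, which I expect to be the main obstacle, since the recursion can reach depth $n$ and a naive per-level snapshot of $\cand{}$ and $\sig D$ would cost $\order{n^2}$. The resolution is that \EnumDSD keeps a single global copy of each structure, edits it in place, and logs the edits per recursion frame, discarding the log on backtracking; thus the live log equals exactly the edits made along the current root-to-leaf path. The neighbor lists and the single $\sig D$ occupy $\order{n+m}$, since $\size{D_u}\le d(u)$ gives $\sum_u\size{D_u}=\order{m}$, and the current $\cand{}$ is $\order{n}$. To bound the path-log I would use monotonicity while descending: by Lemma~\ref{lem:subset} the candidate set only shrinks, so each vertex is deleted from $\cand{}$ at most once, giving $\order{n}$ candidate-set edits; and each vertex toggles its membership in $X\setminus\cand{X}$ at most twice along a path (it can be demoted from $\cand{}$ only once, as $\cand{}$ never regrows, and it can leave $X$ only once), so the dominated-list edits number at most $\sum_w 2\,d(w)=\order{m}$. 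Therefore the live undo log is $\order{n+m}$, and the total space is $\order{n+m}$, completing the proof.
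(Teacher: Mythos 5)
Your proof is correct and follows essentially the same route as the paper's: correctness by inheriting the \EnumDS family-tree argument (Theorem~\ref{theo:enum} together with Lemmas~\ref{lem:subset} and~\ref{lem:del}), running time by summing the per-call bound of Lemma~\ref{lem:update} over the tree so that $\sum_X(\size{ch(X)}+\size{gch(X)})=\order{\size{\Sol{G}}}$, and space by monotonicity of the data structures along the current root-to-leaf path. Your write-up is in fact more explicit than the paper's (the grandchild-counting identity and the undo-log accounting are only implicit there), but it is the same proof.
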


\begin{proof}
    The parent-child relation of \EnumDSD and \EnumDS are same.
    From Lemma~\ref{lem:subset} and Lemma~\ref{lem:del}, \EnumDSD correctly computes all children. 
    Hence, the correctness of \EnumDSD is shown by the same manner of Theorem~\ref{theo:enum}. 
    We next consider the space complexity of \EnumDSD. 
    For any vertex $v$ in $G$, 
    if $v$ is removed from a data structure used in \EnumDSD on a recursive procedure, 
    $v$ will never be added to the data structure on descendant recursive procedures. 
    In addition, 
    for each recursive procedure, the number of data structures that are used in the procedure is constant.
    Hence, the space complexity of \EnumDSD is $\order{n + m}$. 
    We finally consider the time complexity. 
    Each recursive procedure needs $\order{k\size{ch(X)} + k\size{gch(X)}}$ time from Lemma~\ref{lem:update}. 
    Thus, 
    the time complexity of \EnumDSD is $\order{k\sum_{X \in \sig S}(\size{ch(X)} + \size{gch(X)})}$, where $\sig S$ is the set of solutions. 
    Now, $\order{\sum_{X \in \sig S}(\size{ch(X)} + \size{gch(X)})} = \order{\size{\sig S}}$.
    Hence, the statement holds.
\end{proof}

\section{Efficient Enumeration for Graphs with Girth at Least Nine}
\label{subsec:girth}

\begin{algorithm}[t]
  \caption{\texttt{EDS-G} enumerates all dominating sets in $\order{1}$ time per solution for a graph with girth at least nine. }
  \label{algo:dsG}
    \Procedure(\tcp*[f]{$G$: an input graph}){\EnumDSG{$G = (V, E)$}}{
        \lFor{$v \in V$}{
            $f_v \gets \False$
        }
        \AllChildren$(V, V, \set{f_1, \dots, f_{\size{V}}}, G)$\;
    }
  \Procedure{\AllChildren$(X, C, F, G)$}{
        Output $X$\;
        \For(\tcp*[f]{$v$ is the largest vertex in $C$}){$v \in \cand{X}$}{\label{algo:dsg:mainloop}
            $Y \gets X\setminus\set{v}$\;
            $(\cand{Y}, F(Y), G(Y)) \gets$ \CandG$(v, C, F, G)$\;
            \AllChildren$(Y, \cand{Y}, F(Y), G(Y))$\;
            \For{$u \in N_G(v)$}{\label{algo:dsg:subloop}
                \lIf{$u \in C$}{
                    $f_u \gets \True$
                }\lElse{
                    $G \gets G \setminus \set{u}$
                }
            }
            $G \gets G \setminus \set{v}$\;\label{algo:dsg:remove:v:from:gv}
            $C \gets C \setminus \set{v}$\tcp*{Remove vertices in $\Del[3]{X}{v}$. }\label{algo:dsg:set:C}
        }
    }

    \Procedure{\CandG$(v, C, F, G)$}{
        $Del_1 \gets \emptyset$; $Del_2 \gets \emptyset$\;
        \For{$u \in N_G(v)$}{\label{alg:candg:del:start}
            \lIf{$N_G[u] \cap X = \set{u, v}$ and $f_u = \False$}{
                $Del_1 \gets Del_1 \cup \set{u}$
            }\lElseIf{$\exists w(N_G[u] \cap X = \set{w, v})$}{
                $Del_2 \gets Del_2 \cup \set{w}$
            }
        }\label{alg:candg:del:end}
        $C' \gets C \setminus (Del_1 \cup Del_2 \cup \set{v})$\;
        \For(\tcp*[f]{Lemma~\ref{lem:enumdsg:comp:local}}){$u \in N'[Del_1 \cup Del_2]$}{
            $f_u \gets \True$\;
            \lIf{$u \notin C'$}{
               $G \gets G \setminus \set{u}$
            }
        }
        \lIf{$f_v = \True$}{$G \gets G \setminus \set{v}$}
        \Return $(C', F, G)$\;
    }
\end{algorithm}

In this section, we propose an optimum enumeration algorithm \EnumDSG for graphs with girth at least nine, 
where the girth of a graph is the length of a shortest cycle in the graph.
That is, the proposed algorithm runs in constant amortized time per solution for such graphs. 
The algorithm is shown in Algorithm~\ref{algo:dsG}. 
To achieve constant amortized time enumeration, 
    we focus on the \name{local structure} $G_v(X)$ for $(X, v)$ of $G$ defined as follows:   
        $G_v(X) = G[(V \setminus N[X \setminus \cand{X}^{\le v}]) \cup \cand{X}^{\le v}]$. 
Fig.~\ref{fig:girth} shows an example of $G_v(X)$. 
$G_v(X)$ is a subgraph of $G$ induced by vertices that 
    (1) are dominated by vertices only in $\cand{X}^{\le v}$ or 
    (2) are in $\cand{X}^{\le v}$.  
Intuitively speaking, 
    we can efficiently enumerate solutions by using the local structure and ignoring vertices in $G \setminus G_v(X)$ 
    since 
    the number of solutions that are generated according to the structure is enough to reduce the \emph{amortized} time complexity to constant. 
We denote by $G(X) = G[(V \setminus N[X \setminus \cand{X}]) \cup \cand{X}]$ 
    the local structure for $(X, v_*)$ of $G$, 
    where $v_*$ is the largest vertex in $G$.

We first consider the correctness of \EnumDSG. 
The parent-child relation between solutions used in \EnumDSG is the same as in \EnumDS. 
Suppose that $X$ and $Y$ are dominating sets such that $X$ is the parent of $Y$. 
Recall that, from Lemma~\ref{lem:del}, 
$\cand{X} \setminus \cand{Y} = \Del{X}{v}$, 
    where $X = Y \cup \set{v}$. 
We denote by $f_v(u, X) = \True$ if there exists a neighbor $w$ of $u$ such that $w \in X \setminus \cand{X}^{\le v}$; 
Otherwise $f_v(u, X) = \False$. 
Thus, \CandG correctly computes $\Del[1]{X}{v}$ and $\Del[2]{X}{v}$
from line~\ref{alg:candg:del:start} to \ref{alg:candg:del:end}. 
Moreover, in line~\ref{algo:dsg:set:C}, 
vertices in $\Del[3]{X}{v}$ are removed from $\cand{X}$ and hence, 
\CandG also correctly computes $\cand{X \setminus \set{v}}$. 
Moreover, for each vertex $w$ removed from $G$ during enumeration, 
    $w$ is dominated by some vertices in $G$. 
Hence, by the same discussion as Theorem~\ref{theo:enum}, 
    we can show that \EnumDSG enumerates all dominating sets. 
In the remaining of this section, 
    we show the time complexity of \EnumDSG. 
Note that $G_v(X)$ does not include any vertex in $N[\Del[3]{X}{v}\setminus\set{v}]\setminus\cand{X}^{\le v}$.  
Hence, 
we will consider only vertices in $\Del[1]{X}{v} \cup \Del[2]{X}{v}\cup\set{v}$. 
We denote by $\DelG{X}{v} = \Del[1]{X}{v}\cup \Del[2]{X}{v}\cup\set{v}$.  
We first show the time complexity for updating the candidate sets. 

In what follows, 
if $v$ is the largest vertex in $\cand{X}$, then we simply write $f(u, X)$ as $f_v(u, X)$. 
We denote by $N'_v(u) = N_{G_v(X)}(u)$, $N'_v[u] = N'_v(u) \cup \set{u}$, and $d'_v(u) = \size{N'_v(u)}$ if no confusion arises. 
Suppose that $G$ and $G_v(X)$ are stored in an adjacency list, and 
neighbors of a vertex are stored in a doubly linked list and sorted in the ordering. 

\begin{lemma}
\label{lem:cand:dsg:del1}
    Let $X$ be a dominating set, $v$ be a vertex in $\cand{X}$, and $u$ be a vertex in $G$. 
    Then, $u \in \Del[1]{X}{v}$ if and only if 
    $N'_v[u] \cap X = \set{u, v}$ and $f_v(u, X) = \False$.
\end{lemma}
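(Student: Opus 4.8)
The plan is to prove the biconditional by first isolating the two structural facts about the local graph $G_v(X)$ that make the $G$-side and the $G_v(X)$-side comparable, and then checking each implication directly. The single most useful observation, which I would establish up front, is that the only vertices of $X$ that survive in $G_v(X)$ are exactly those in $\cand{X}^{\le v}$: any $x \in X \setminus \cand{X}^{\le v}$ lies in $N[X \setminus \cand{X}^{\le v}]$ yet not in $\cand{X}^{\le v}$, so it is deleted, while every $x \in \cand{X}^{\le v}$ is retained by definition of $G_v(X)$. Hence $N'_v[u] \cap X = N'_v[u] \cap \cand{X}^{\le v}$ for every $u$, and since $G_v(X)$ is an induced subgraph of $G$ we always have $N'_v[u] \subseteq N[u]$, so $N'_v[u] \cap X \subseteq N[u] \cap X$. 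The flag $f_v(u,X)$ is precisely the indicator of whether $u$ keeps a neighbour in the deleted part $X \setminus \cand{X}^{\le v}$. Together these say that $N[u] \cap X$ splits into a part inside $\cand{X}^{\le v}$, recorded by $N'_v[u] \cap X$, and a part in $X \setminus \cand{X}^{\le v}$, recorded by $f_v(u,X)$.

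For the forward implication I would take $u \in \Del[1]{X}{v}$, so $u \in \cand{X}^{<v}$ and $N[u] \cap X = \set{u,v}$ in $G$. Since $v \in \cand{X}$ with $v \le v$, we have $v \in \cand{X}^{\le v}$; hence no $G$-neighbour $w$ of $u$ can lie in $X \setminus \cand{X}^{\le v}$, because such a $w$ would satisfy $w \in N[u] \cap X = \set{u,v}$ and $w \ne u$, forcing $w = v \in \cand{X}^{\le v}$, a contradiction. This gives $f_v(u,X) = \False$. Both $u$ and $v$ lie in $\cand{X}^{\le v}$ and hence in $G_v(X)$, so the edge $\set{u,v}$ is preserved and $\set{u,v} \subseteq N'_v[u] \cap X$; the reverse inclusion is immediate from $N'_v[u] \cap X \subseteq N[u] \cap X = \set{u,v}$. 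For the converse I would assume $N'_v[u] \cap X = \set{u,v}$ and $f_v(u,X) = \False$. The structural fact forces $u,v \in \cand{X}^{\le v}$ with $u \ne v$, hence $u \in \cand{X}^{<v}$. It then remains to upgrade the local statement to the global one: taking any $w \in N[u] \cap X$, I argue $w \in \set{u,v}$ by cases — if $w \in X \setminus \cand{X}^{\le v}$ then $f_v(u,X)$ would be $\True$, a contradiction, and if $w \in \cand{X}^{\le v}$ then both $u$ and $w$ lie in $G_v(X)$, the edge survives, and $w \in N'_v[u] \cap X = \set{u,v}$. This yields $N[u] \cap X = \set{u,v}$ and thus $u \in \Del[1]{X}{v}$.

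The main obstacle, and the point I would be most careful about, is the passage between the induced subgraph $G_v(X)$ and the full graph $G$ in the converse direction: knowing that $w$ is a $G$-neighbour of $u$ does not by itself place the edge $\set{u,w}$ into $G_v(X)$, so I must separately confirm that $u$ itself belongs to $G_v(X)$ (which follows from $u \in \cand{X}^{\le v}$) before I can transport a surviving $X$-neighbour $w$ back into $N'_v[u]$. The hypothesis $f_v(u,X) = \False$ is exactly what eliminates the only remaining possibility, namely a neighbour of $u$ lost to the deletion of $N[X \setminus \cand{X}^{\le v}]$. Beyond this bookkeeping I expect no difficulty; in particular the degeneracy and girth hypotheses play no role here, since the statement is a purely set-theoretic characterisation of $\Del[1]{X}{v}$ in terms of the local structure.
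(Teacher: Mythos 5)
Your proof is correct and takes essentially the same route as the paper's: the forward direction by direct verification, and the converse by decomposing $N[u] \cap X$ into the part lying in $\cand{X}^{\le v}$ (which survives into $N'_v[u]$) and the part in $X \setminus \cand{X}^{\le v}$ (which is empty because $f_v(u,X) = \False$). Your explicitly stated structural fact---that the vertices of $X$ surviving in $G_v(X)$ are exactly $\cand{X}^{\le v}$---is only implicit in the paper's terser argument, and you are in fact more careful than the paper in checking the membership condition $u \in \cand{X}^{<v}$ needed to conclude $u \in \Del[1]{X}{v}$.
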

\begin{proof}
    The only if part is obvious 
    since $u, v \in \cand{X}^{\le v}$ and $N[u]\cap X = \set{u, v}$. 
    We next prove the if part. 
    Since $f_v(u, x) = \False$, 
    $N[u] \cap (X\setminus\cand{X}^{\le v}) = \emptyset$. 
    Moreover, 
    since $(N'_v[u] \cap X) \subseteq \cand{X}^{\le v}$, 
    $N[u] \cap X = N'_v[u] \cup (N[u] \cap (X\setminus\cand{X}^{v<})) = \set{u, v}$. 
    Hence, the statement holds. 
\end{proof}

\begin{lemma}
\label{lem:cand:dsg:del2}
    Let $X$ be a dominating set, $v$ be a vertex in $\cand{X}$, and $u$ be a vertex in $G$. 
    Then, $u \in \Del[2]{X}{v}$ 
    if and only if 
    there is a vertex $w$ in $G_v(X)$ such that $N'_v[w] \cap X = \set{u, v}$. 
\end{lemma}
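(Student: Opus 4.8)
The plan is to mirror the proof of Lemma~\ref{lem:cand:dsg:del1}, establishing both implications by transferring between the global closed neighborhood $N[w]$ in $G$ and the local one $N'_v[w]$ in $G_v(X)$. The single fact I would isolate first is that a solution vertex $x \in X$ lies in $G_v(X)$ if and only if $x \in \cand{X}^{\le v}$: if $x \in X \setminus \cand{X}^{\le v}$ then $x \in N[X \setminus \cand{X}^{\le v}]$ and is deleted, whereas every vertex of $\cand{X}^{\le v}$ is retained by the definition of $G_v(X)$. Hence $X \cap V(G_v(X)) = \cand{X}^{\le v}$, and for any $w \in G_v(X)$ we have $N'_v[w] \cap X \subseteq \cand{X}^{\le v}$.

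For the only-if direction, suppose $u \in \Del[2]{X}{v}$, so $u \in \cand{X}^{<v}$ and there is a witness $w \in V \setminus (X \setminus \set{v})$ with $N[w] \cap X = \set{u, v}$. First I would show $w \in G_v(X)$: since $u, v \in \cand{X}^{\le v}$, we get $N[w] \cap (X \setminus \cand{X}^{\le v}) = \emptyset$, i.e. $w \notin N[X \setminus \cand{X}^{\le v}]$, so $w$ survives in $G_v(X)$. Because $u$ and $v$ are also retained and are adjacent to (or equal to) $w$ in $G$, these incidences persist in the induced subgraph, giving $\set{u,v} \subseteq N'_v[w]$; together with $N'_v[w] \subseteq N[w]$ this forces $N'_v[w] \cap X = \set{u, v}$.

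For the if direction, suppose $w \in G_v(X)$ with $N'_v[w] \cap X = \set{u, v}$. From the observation above $u, v \in \cand{X}^{\le v}$, and since $u \neq v$ this gives $u \in \cand{X}^{<v}$. The core step is to upgrade the local identity to $N[w] \cap X = \set{u, v}$ in $G$. When $w \notin \cand{X}^{\le v}$, membership $w \in G_v(X)$ forces $w \in V \setminus N[X \setminus \cand{X}^{\le v}]$, hence $N[w] \cap X = N[w] \cap \cand{X}^{\le v} = N'_v[w] \cap X = \set{u, v}$, again using that neighbors of $w$ in $\cand{X}^{\le v}$ are exactly those visible inside $G_v(X)$. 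In this case $w \notin X$, so $w$ is a legitimate witness in $V \setminus (X \setminus \set{v})$, and therefore $u \in \Del[2]{X}{v}$.

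The step I expect to be the main obstacle is the residual case $w \in \cand{X}^{\le v}$, where the witness is itself a candidate: then $w \in N'_v[w] \cap X = \set{u, v}$ forces $w \in \set{u, v}$, and one must prevent a local witness from producing a spurious global membership. For $w = v$ one has to verify through the flag $f_v(\cdot, X)$ that $v$ has no solution-neighbor outside $\cand{X}^{\le v}$, so that $N[v] \cap X = \set{u,v}$ truly holds and $v$ (which lies in $V \setminus (X \setminus \set{v})$) is a valid witness; for $w = u$ the relevant point is that $u$'s global adjacencies into $X \setminus \cand{X}^{\le v}$ are governed precisely by $f_v(u, X)$, the same predicate that separates $\Del[1]{X}{v}$ from $\Del[2]{X}{v}$ in Lemma~\ref{lem:cand:dsg:del1}. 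Reconciling the local witness with a genuine global $\Del[2]$-witness through these flag conditions is the delicate part; everything else reduces to the routine transfer between $N[\cdot]$ and $N'_v[\cdot]$ driven by the definition of $G_v(X)$.
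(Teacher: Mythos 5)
Your only-if direction and the main case of your if direction (witness $w \notin X$) are correct, and together they are essentially the paper's entire proof: the paper likewise observes that $X \cap V(G_v(X)) = \cand{X}^{\le v}$ and transfers between $N[w]$ and $N'_v[w]$. The genuine gap is the residual case you isolate but do not close, $w \in \set{u,v}$, and it cannot be closed, because the if direction of the lemma is false exactly there. Let $G$ be the six-vertex path with edges $\set{1,2},\set{2,3},\set{1,4},\set{3,5},\set{4,6}$ (a tree, so it even meets the girth requirement of \EnumDSG), and let $X = \set{1,2,3,4}$, a dominating set with $\cand{X} = \set{1,2}$; take $v = 2$, $u = 1$. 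Then $X \setminus \cand{X}^{\le 2} = \set{3,4}$ and $N[\set{3,4}] = V$, so $G_2(X) = G[\set{1,2}]$ and $N'_2[2] \cap X = \set{1,2} = \set{u,v}$: the right-hand side of the lemma holds with $w = v$ (and also with $w = u$). But the admissible global witnesses are only $2, 5, 6$, and $N[2] \cap X = \set{1,2,3}$, $N[5] \cap X = \set{3}$, $N[6] \cap X = \set{4}$, so $\Del[2]{X}{2} = \emptyset$; moreover $N[1] \cap X = \set{1,2,4}$, so $1 \notin \Del[1]{X}{2}$ either. Indeed $\cand{X \setminus \set{2}} = \set{1}$, i.e., $u = 1$ must \emph{remain} a candidate, yet the local test of the lemma would delete it.

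You should also know that the paper's own proof of this lemma has exactly the hole you ran into: it asserts that $N'_v[w] \cap X = \set{u,v}$ forces $w \notin X$, which silently dismisses the case $w \in \set{u,v}$ — the only way a vertex of $X$ lying in $G_v(X)$ can satisfy the hypothesis. Your instinct about the repair is the right one: the witness must additionally satisfy $f_v(w,X) = \False$ (in the counterexample $f_2(1,X) = f_2(2,X) = \True$, since $1$ and $2$ have neighbors $4$ and $3$ in $X \setminus \cand{X}^{\le 2}$), and under that extra hypothesis the provable conclusion is $u \in \Del[1]{X}{v} \cup \Del[2]{X}{v}$ rather than $u \in \Del[2]{X}{v}$, since a witness $w = u$ yields $\Del[1]$-membership. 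That flagged, combined statement is the one that Lemma~\ref{lem:enumdsg:comp:cand} and the procedure \CandG actually rely on, and it is what should be proved here; your transfer argument between $N[\cdot]$ and $N'_v[\cdot]$ then goes through verbatim, with the flag condition doing the work of the unjustified claim $w \notin X$. So your proposal is incomplete, but the missing step is not a routine verification you skipped: it is a counterexample-level defect of the statement as written, which your plan correctly located and the paper's proof papers over.
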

\begin{proof}
    The only if part is obvious 
    since $u, v \in \cand{X}^{\le v}$ and  
    there is a vertex $w$ such that $N[w]\cap X = \set{u, v}$. 
    We next show the if part. 
    Since $w\in G_v(X)$,
    $w\in\cand{X}^{\le v}$ or $w \notin X \cup N[X\setminus\cand{X}^{\le v}]$. 
    Moreover, since $N'[w] = \set{u, v}$, $w\notin X$, that is, $w \notin \cand{X}$. 
    Hence, $w\notin N[X\setminus\cand{X}^{\le v}]$. 
    Therefore, $N[w] \cap X = (N'_v[w] \cap X) \cup (N[w]\cap(X\setminus\cand{X}^{v<})) = \set{u, v}$  
    and the statement holds. 
\end{proof}

\begin{lemma}
\label{lem:enumdsg:comp:cand}
    Let $X$ be a dominating set and $v$ be a vertex in $\cand{X}$. 
    Suppose that
    for any vertex $u$, 
    we can check the number of $u$'s neighbors in the local structure $G_v(X)$ and 
    the value of $f_v(u, X)$ in constant time. 
    Then, we can compute $\cand{X\setminus\set{v}}$ from $\cand{X}^{\le v}$ in $\order{d'_v(v)}$ time 
\end{lemma}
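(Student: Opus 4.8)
The plan is to realise $\cand{Y}$ for $Y=X\setminus\set{v}$ as a set difference that can be computed by a single scan over the neighbourhood of $v$ in the local structure. First I would recall from the parent--child relation that $\cand{Y}=\cand{X}\setminus\Del{X}{v}$, and by Lemma~\ref{lem:del} that $\Del{X}{v}=\Del[1]{X}{v}\cup\Del[2]{X}{v}\cup\Del[3]{X}{v}$. Because the input is $\cand{X}^{\le v}$, the larger candidates of $\Del[3]{X}{v}=\cand{X}^{v\le}$ have already been discarded, and $\Del[3]{X}{v}\cap\cand{X}^{\le v}=\set{v}$; since moreover $\Del[1]{X}{v},\Del[2]{X}{v}\subseteq\cand{X}^{<v}$, we obtain $\cand{Y}=\cand{X}^{\le v}\setminus\DelG{X}{v}$ with $\DelG{X}{v}=\Del[1]{X}{v}\cup\Del[2]{X}{v}\cup\set{v}$. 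Hence it suffices to identify $\Del[1]{X}{v}$ and $\Del[2]{X}{v}$ and strike them, together with $v$, from the sorted list $\cand{X}^{\le v}$.

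Next I would show that every deleted vertex, together with its certifying witness, lies in $N'_v(v)$, so that touching only the $d'_v(v)$ neighbours of $v$ in $G_v(X)$ is enough. For $\Del[1]{X}{v}$, Lemma~\ref{lem:cand:dsg:del1} characterises $u\in\Del[1]{X}{v}$ by $N'_v[u]\cap X=\set{u,v}$ and $f_v(u,X)=\False$; in particular $v\in N'_v[u]$ with $u\neq v$, so $u\in N'_v(v)$. For $\Del[2]{X}{v}$, Lemma~\ref{lem:cand:dsg:del2} supplies a witness $w\in G_v(X)$ with $N'_v[w]\cap X=\set{u,v}$; since $w\notin X$ while $v\in X$ we have $w\neq v$, and $v\in N'_v[w]$ forces $w\in N'_v(v)$, with $u$ recovered as the unique element of $(N'_v[w]\cap X)\setminus\set{v}$. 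Thus a single pass over $N'_v(v)$ encounters every candidate $u$ that may enter $\Del[1]{X}{v}$ and every witness $w$ that can place a vertex into $\Del[2]{X}{v}$.

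Finally I would bound the cost per neighbour. For a candidate neighbour $u$ of $v$ I decide $u\in\Del[1]{X}{v}$ by testing the two predicates of Lemma~\ref{lem:cand:dsg:del1}; for a non-candidate neighbour $w$ of $v$ I test whether $N'_v[w]\cap X=\set{u,v}$ for a single $u$ and, if so, mark $u$. Both tests reduce to querying, inside $G_v(X)$, the number of a vertex's $X$-neighbours and the flag $f_v$, which by hypothesis are available in $\order{1}$ time. Summing over the $d'_v(v)$ neighbours and then deleting $\DelG{X}{v}$ from the sorted list $\cand{X}^{\le v}$ each cost $\order{d'_v(v)}$, which yields the claim. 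The main obstacle is this last reduction: one must certify that the global membership conditions $N[\cdot]\cap X=\set{\cdot,v}$ are faithfully answered from the local structure alone, which rests on the invariant $V(G_v(X))\cap X=\cand{X}^{\le v}$ (so that $N'_v[\cdot]\cap X$ sees exactly the relevant candidates) combined with $f_v$ accounting for the neighbours lying in $X\setminus\cand{X}^{\le v}$ that $G_v(X)$ suppresses.
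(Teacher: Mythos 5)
Your proposal follows the same route as the paper's proof: the $\Del[3]{X}{v}$ part costs nothing because the input is already $\cand{X}^{\le v}$, and the rest is a single scan of $N'_v(v)$ that decides membership in $\Del[1]{X}{v}$ and $\Del[2]{X}{v}$ via Lemmas~\ref{lem:cand:dsg:del1} and~\ref{lem:cand:dsg:del2}, using the constant-time degree and $f_v$ oracles, for $\order{d'_v(v)}$ total. You are in fact more explicit than the paper on the key point, namely \emph{why} a scan of $N'_v(v)$ finds everything that must be deleted.

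However, that explicit step contains a genuine gap. You claim the witness $w$ of Lemma~\ref{lem:cand:dsg:del2} satisfies $w \notin X$, hence $w \neq v$, hence $w \in N'_v(v)$. Nothing in that lemma's statement gives $w \notin X$, and the definition of $\Del[2]{X}{v}$ expressly allows the witness to be $v$ itself: the witness ranges over $V \setminus (X \setminus \set{v})$, which contains $v$. This case really occurs. Take the path on vertices $1,2,3$ with edges $\set{1,2}$ and $\set{2,3}$, let $X = V$ and $v = 3$; then $N[3] \cap X = \set{2,3}$, so $2 \in \Del[2]{X}{3}$ and its \emph{only} witness is $w = 3 = v$. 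Your scan sees only the neighbour $u = 2 \in N'_3(3)$, for which $N'_3[2] \cap X = \set{1,2,3}$, so neither of your per-neighbour tests fires and $2$ is wrongly retained in $\cand{X\setminus\set{3}}$; continuing the enumeration from there eventually outputs $\set{1}$, which is not a dominating set. The repair is cheap and stays within the claimed bound: additionally test $v$ itself, i.e., check that $f_v(v,X) = \False$ and that $N'_v(v)$ contains exactly one vertex $u$ of $\cand{X}^{\le v}$, and if so put that $u$ into the deletion set; this is another $\order{d'_v(v)}$ scan. In fairness, the paper's own proof (and its procedure \textsc{CandG}) also scans only $w \in N'_v(v)$ and so has exactly the same omission; your write-up merely makes the unstated completeness claim explicit, which is where the hole becomes visible, and it must be patched as above for the argument to stand.
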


\begin{proof}
    Since $\Del[3]{X}{v} \cap \cand{X\setminus\set{v}} = \emptyset$, 
        $\cand{X\setminus\set{v}} \subseteq \cand{X}^{\le v}$. 
        Thus, we do not need to remove vertices in $\Del[3]{X}{v}$ from $\cand{X}^{\le v}$. 
    From Lemma~\ref{lem:cand:dsg:del1}, 
        for each vertex $u \in N'_v(v)$,
        we can check whether $u \in \Del[1]{X}{v}$ or not in constant time 
        by confirming that $f_v(u, X) = \False$ and $\size{N'_v(u)} = 2$.  
    Moreover, from Lemma~\ref{lem:cand:dsg:del2}, 
        for each vertex $w \in N'_v(v)$, 
        we can compute $\Del[2]{X}{v}$ 
        by listing vertices in $u \in \cand{X}^{\le v}$ such that $N'[w] \cap X = \set{u, v}$ or not. 
    Note that since any vertex in $X^{<v}$ belongs to $X$, 
        $N'[w] \cap X =\set{u, v}$ 
        if $f_v(w, X) = \False$, $\size{N'[w]} = 2$,
        and $u$ and $v$ are adjacent to $w$. 
    Hence, the statement holds. 
\end{proof}

\begin{lemma}
\label{lem:enumdsg:comp:local}
    Let $X$ be a dominating set, $v$ be a vertex in $\cand{X}$, and $Y = X\setminus\set{v}$. 
    Then, we can compute $G(Y)$ from $G_v(X)$ 
    in $\order{\sum_{u \in \DelG{X}{v}}d'_v(u) + \sum_{u \in G_v(X) \setminus G(Y)}d'_v(u)}$ time. 
    Note that $N'_v(u) = N_{G_v(X)}(u)$ and $d'_v(u) = \size{N'_v(u)}$. 
\end{lemma}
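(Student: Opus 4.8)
The plan is to show that $G(Y)$ arises from $G_v(X)$ by \emph{deleting} vertices only (never inserting), and that every deleted vertex lies in the closed neighborhood $N'_v[\DelG{X}{v}]$ taken inside $G_v(X)$; the two sums in the bound then pay, respectively, for locating the deletions and for carrying them out. First I would fix the set relations at the moment $v$ is processed, where $C=\cand{X}^{\le v}$ holds. By Lemma~\ref{lem:del}, $\cand{X}\setminus\cand{Y}=\Del{X}{v}=\Del[1]{X}{v}\cup\Del[2]{X}{v}\cup\Del[3]{X}{v}$, and since $\Del[3]{X}{v}$ contains no candidate smaller than $v$ while $\Del[1]{X}{v}\cup\Del[2]{X}{v}\subseteq\cand{X}^{<v}$, this yields $\cand{Y}=\cand{X}^{\le v}\setminus\DelG{X}{v}$ and, dually, $Y\setminus\cand{Y}=(X\setminus\cand{X}^{\le v})\cup\Del[1]{X}{v}\cup\Del[2]{X}{v}$. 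Thus the candidate part shrinks exactly by $\DelG{X}{v}$ while the dominator part grows exactly by $\Del[1]{X}{v}\cup\Del[2]{X}{v}$. Because $\cand{Y}\subseteq\cand{X}^{\le v}$ and $N[Y\setminus\cand{Y}]\supseteq N[X\setminus\cand{X}^{\le v}]$, the defining vertex sets satisfy $V(G(Y))\subseteq V(G_v(X))$, so the update is a sequence of deletions.

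Next I would characterize the deleted set. A vertex $z\in G_v(X)$ can vanish only for one of two reasons: it was present as a candidate but fell out of $\cand{Y}$, forcing $z\in\DelG{X}{v}$; or it was present as an undominated vertex now dominated by the enlarged dominator set, which—because that set grew only by $\Del[1]{X}{v}\cup\Del[2]{X}{v}$—forces $z\in N'_v(\Del[1]{X}{v}\cup\Del[2]{X}{v})$. Each member of $\Del[1]{X}{v}\cup\Del[2]{X}{v}$ indeed leaves, since it now dominates itself, and $v$ leaves precisely when it acquires a neighbor in $Y\setminus\cand{Y}$, which is detected by the flag $f_v$. Hence $V(G_v(X))\setminus V(G(Y))\subseteq N'_v[\DelG{X}{v}]$, and scanning $N'_v[\Del[1]{X}{v}\cup\Del[2]{X}{v}]$ together with $v$ discovers every vertex that must be removed.

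For the running time, computing $\Del[1]{X}{v}$, $\Del[2]{X}{v}$, and $\cand{Y}$ costs $\order{d'_v(v)}$ by Lemmas~\ref{lem:cand:dsg:del1}, \ref{lem:cand:dsg:del2}, and~\ref{lem:enumdsg:comp:cand}, which lies within the first sum. Walking over $N'_v[\Del[1]{X}{v}\cup\Del[2]{X}{v}]$ to set the flags $f$ costs $\order{\sum_{u\in\Del[1]{X}{v}\cup\Del[2]{X}{v}}d'_v(u)}$, again inside the first sum, and unlinking each deleted vertex from its neighbors' adjacency lists costs time proportional to its degree in $G_v(X)$, contributing $\order{\sum_{u\in V(G_v(X))\setminus V(G(Y))}d'_v(u)}$, the second sum. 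Adding these yields the claimed bound.

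The main obstacle, I expect, is the containment argument of the second step: I must verify that no vertex outside $N'_v[\DelG{X}{v}]$ silently changes its status. In particular, because $v$ is deleted as a \emph{candidate} rather than promoted to a dominator, its neighbors that are not in $\Del[1]{X}{v}\cup\Del[2]{X}{v}$ gain no new dominator and so remain in the structure; this is exactly what confines the search to the cheap local scan. I also need the flag update to be consistent with the new dominator set—each relevant $f_u$ only turns from $\False$ to $\True$ and no flag must be reset—so that the bookkeeping cost stays within the two sums.
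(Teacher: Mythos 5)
Your proposal is correct and follows essentially the same route as the paper's proof: you establish that the update from $G_v(X)$ to $G(Y)$ consists only of deletions, that every deleted vertex lies in $N'_v[\DelG{X}{v}]$ (the same localization the paper derives via its conditions (A) and (B)), and you charge the flag-updating scan to the first sum and the degree-proportional unlinking of deleted vertices to the second sum, exactly as the paper does. Your explicit derivation of $\cand{Y}=\cand{X}^{\le v}\setminus\DelG{X}{v}$ and $Y\setminus\cand{Y}=(X\setminus\cand{X}^{\le v})\cup\Del[1]{X}{v}\cup\Del[2]{X}{v}$, and your note that flags only flip from $\False$ to $\True$, merely make explicit what the paper leaves implicit.
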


\begin{proof}
    From the definition, $V(G(Y)) \subseteq V(G_v(X))$. 
    Let us denote by $u$ a vertex in $G_v(X)$ but not in $G(Y)$ such that $u \neq v$. 
    This implies that 
        (A) $u$ is dominated by some vertex in $Y \setminus \cand{Y}$ and 
        (B) $u \notin \cand{Y}$. 
    Thus, for any vertex $u' \notin N'_v[\DelG{X}{v}\setminus\set{v}]$, $u' \in G_v(X)$ if and only if $u' \in G(Y)$. 
    Hence, 
        we can find such vertex $u$ by 
        checking whether for each vertex $w \in N'_v[\DelG{X}{v}]$, $w$ satisfies (A) and (B). 
    Before checking, 
        we first update the value of $f$. 
    This can be done by checking all the vertices in $N'_v[\DelG{X}{v}]$ and in $\order{1}$ time per vertex. 
    Hence, this update needs  $\order{\sum_{w \in \DelG{X}{v}}d'_v(w)}$ time. 
    If $w$ satisfies these conditions, that is, $f_v(w, X) = \False$, $f(w, Y) = \True$, and (B), 
        then we remove $w$ and edges that are incident to $w$ from $G_v(X)$. 
    This needs $\order{\sum_{w \in G_v(X) \setminus G(Y)}d'_v(w)}$ total time for removing vertices. 
    Thus, the statement holds. 
\end{proof}

From  Lemma~\ref{lem:enumdsg:comp:cand} and Lemma~\ref{lem:enumdsg:comp:local}, 
    we can compute the local structure and the candidate set of $Y$ from those of $X$ 
     in $\order{\sum_{u \in \DelG{X}{v}}d'_v(u) + \sum_{u \in G_v(X) \setminus G(Y)}d'_v(u)}$ time. 
We next consider the time complexity of the loop in line~\ref{algo:dsg:subloop}. 
In this loop procedure, \EnumDSG deletes all the neighbors $u$ of $v$ from $G_v(X)$
    if $u \notin \cand{X}^{\le v}$ 
    because for each descendant $W$ of dominating set $Y'$, 
    $v \in W \setminus \cand{W}$,  
    where $Y'$ is a child of $X$ and is generated after $Y$. 
Thus, 
this needs $\order{d'_v(v) + \sum_{u \in N'(v)\setminus X}d'_v(u)}$ time. 
Hence, 
from the above discussion, 
we can obtain the following lemma: 

\begin{lemma}
\label{lem:dsg:delte}
    Let $X$ be a dominating set, $v$ be a vertex in $\cand{X}$, and $Y = X \setminus \set{v}$. 
    Then, \AllChildren other than a recursive call runs in the following time bound: 
    \begin{equation}
    \label{eq:upper}
    \order{
        \sum_{u \in \DelG{X}{v}}d'_v(u) + 
        \sum_{u \in G_v(X) \setminus G(Y)}d'_v(u) + 
        \sum_{u\in N'_v(v)\setminus X}d'_v(u)
    }. 
    \end{equation}
\end{lemma}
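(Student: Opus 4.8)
The plan is to charge one iteration of the main loop of \AllChildren (the body for a fixed candidate $v$, excluding the recursive call it makes) to its constituent operations and to bound each using the already-established Lemmas~\ref{lem:enumdsg:comp:cand} and \ref{lem:enumdsg:comp:local}. The enabling fact I would prove first is a loop invariant: when the body of the loop in line~\ref{algo:dsg:mainloop} processes the current vertex $v$ (the largest vertex remaining in $C$), the maintained graph $G$ equals the local structure $G_v(X)$ and $C = \cand{X}^{\le v}$. This holds at the first (largest) candidate because $G(X) = G_v(X)$ there, and it is preserved across iterations: after the recursive call, line~\ref{algo:dsg:subloop} sets $f_u$ for candidate neighbors of $v$ and deletes from $G$ every neighbor $u \in N_G(v)$ with $u \notin C$, line~\ref{algo:dsg:remove:v:from:gv} deletes $v$, and line~\ref{algo:dsg:set:C} removes $v$ from $C$; since moving $v$ out of $\cand{X}^{\le v}$ into $X \setminus \cand{X}^{\le v'}$ (for the next candidate $v'$) places $v$ and every $u \in N(v) \setminus \cand{X}^{\le v'}$ inside $N[X\setminus\cand{X}^{\le v'}]$, these deletions transform $G_v(X)$ exactly into $G_{v'}(X)$. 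Under this invariant, $N_G(v) = N'_v(v)$, every degree $d'_v(u)$ is read in constant time from the adjacency list of $G$, and each $f_v(u,X)$ is read in constant time from the array $F$, so the constant-time hypotheses of the two cited lemmas are met.

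With the invariant in hand, I would bound the pieces. The call \CandG computes $\cand{Y}$ from $\cand{X}^{\le v}$ in $\order{d'_v(v)}$ time by Lemma~\ref{lem:enumdsg:comp:cand} and updates the local structure to $G(Y)$ in $\order{\sum_{u\in\DelG{X}{v}}d'_v(u) + \sum_{u\in G_v(X)\setminus G(Y)}d'_v(u)}$ time by Lemma~\ref{lem:enumdsg:comp:local}; these supply the first two sums of \eqref{eq:upper}. The neighbor loop of line~\ref{algo:dsg:subloop} scans the $d'_v(v)$ vertices of $N_G(v)$ and, for each $u \notin C$, deletes $u$ from $G$ at cost $\order{d'_v(u)}$. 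Here I would observe that a neighbor $u \in N'_v(v)$ with $u \notin \cand{X}^{\le v}$ cannot lie in $X$, for otherwise $u \in X \setminus \cand{X}^{\le v} \subseteq N[X\setminus\cand{X}^{\le v}]$ would exclude $u$ from $V(G_v(X))$, contradicting $u \in N'_v(v)$; hence $N'_v(v)\setminus C = N'_v(v)\setminus X$ and this loop costs $\order{d'_v(v) + \sum_{u\in N'_v(v)\setminus X}d'_v(u)}$, the third sum. Finally, deleting $v$ in line~\ref{algo:dsg:remove:v:from:gv} costs $\order{d'_v(v)}$ and the update in line~\ref{algo:dsg:set:C} costs $\order{1}$, since $C$ is sorted and $v$ is its largest element. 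Summing and using $v \in \DelG{X}{v}$ to absorb the stray $\order{d'_v(v)}$ terms into $\sum_{u\in\DelG{X}{v}}d'_v(u)$ yields precisely \eqref{eq:upper}.

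I expect the main obstacle to be the careful verification of the loop invariant together with the constant-time-access hypotheses, rather than the final arithmetic, which is routine once the pieces are bounded. Specifically, the cited Lemmas~\ref{lem:enumdsg:comp:cand} and \ref{lem:enumdsg:comp:local} become applicable only after one confirms that the deletions at the end of each iteration correctly carry $G_v(X)$ to $G_{v'}(X)$ for the next candidate and that the degree and flag queries they rely on really are answered in constant time by the maintained structures $G$ and $F$; establishing this bookkeeping is the crux, and it is also what guarantees that the work is expressible purely in terms of the degrees $d'_v(u)$ appearing in \eqref{eq:upper}.
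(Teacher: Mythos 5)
Your proof is correct and follows essentially the same route as the paper: the paper also obtains the first two sums by invoking Lemma~\ref{lem:enumdsg:comp:cand} and Lemma~\ref{lem:enumdsg:comp:local} for the work done in \CandG, and charges the loop at line~\ref{algo:dsg:subloop} with $\order{d'_v(v) + \sum_{u\in N'_v(v)\setminus X}d'_v(u)}$, absorbing the stray $d'_v(v)$ terms since $v \in \DelG{X}{v}$. The only difference is that you spell out the loop invariant ($G = G_v(X)$, $C = \cand{X}^{\le v}$, and constant-time degree/flag access) that the paper leaves implicit, which is a welcome tightening rather than a new idea.
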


Before we analyze the number of descendants of $X$, 
we show the following lemmas. 

\begin{lemma}
\label{lem:pendant}
    Let us denote by $Pen_v(X) = \inset{u \in \DelG{X}{v}}{d'_v(u) = 1}$. 
    Then, $\sum_{v\in\cand{X}}\size{Pen_v(X)}$ is at most $\size{\cand{X}}$. 
\end{lemma}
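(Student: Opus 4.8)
The plan is to prove the bound by a counting argument that reorganizes the double sum and then exhibits an injection into $\cand{X}$. First I would rewrite $\sum_{v \in \cand{X}} \size{Pen_v(X)}$ as the number of pairs $(v, u)$ with $v \in \cand{X}$ and $u \in Pen_v(X)$, and sum over $u$ rather than over $v$. Since $Pen_v(X) \subseteq \DelG{X}{v} \subseteq \cand{X}^{\le v} \subseteq \cand{X}$, every such $u$ already lies in $\cand{X}$, so it suffices to show that each $u \in \cand{X}$ satisfies $u \in Pen_v(X)$ for \emph{at most one} $v \in \cand{X}$. Granting this, the map $(v,u) \mapsto u$ is injective, so the number of pairs is at most $\size{\cand{X}}$.

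To establish the ``at most one $v$'' claim I would first pin down the unique neighbour of a pendant. If $u \in Pen_v(X)$ with $u \neq v$ then $d'_v(u) = 1$, and I would identify its single neighbour in $G_v(X)$ according to how $u$ enters $\DelG{X}{v}$. If $u \in \Del[1]{X}{v}$, then $N[u] \cap X = \set{u, v}$ forces the neighbour to be $v$, since both lie in $\cand{X}^{\le v}$ and hence the edge $\set{u,v}$ survives in $G_v(X)$. If $u \in \Del[2]{X}{v}$, then $u$ is non-adjacent to $v$ in $G_v(X)$ and its neighbour is the private witness $w \notin X$ with $N[w] \cap X = \set{u,v}$, which survives because its only $X$-neighbours $u,v$ both lie in $\cand{X}^{\le v}$. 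The self-case $u = v$ (contributed by $\set{v} \subseteq \DelG{X}{v}$) I would treat in parallel, writing $z$ for the unique $G_v(X)$-neighbour of $v$.

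The heart of the argument is a short case analysis assuming $u \in Pen_v(X) \cap Pen_{v'}(X)$ with $v \neq v'$. The $\Del[1]$--$\Del[1]$ case is immediate, since $N[u]\cap X = \set{u,v} = \set{u,v'}$ forces $v = v'$. In every remaining case I would produce a \emph{second}, distinct neighbour of $u$ in one of the two local structures, contradicting pendancy. The mechanism is uniform: whether a non-candidate neighbour $y$ of $u$ survives in $G_v(X)$ is decided by whether all $X$-neighbours of $y$ lie in $\cand{X}^{\le v}$, and for the relevant $y$ this survival condition reduces to a comparison between $v$ and $v'$. For instance, in the $\Del[2]$--$\Del[2]$ case the witness $w'$ for $v'$ survives in $G_v(X)$ exactly when $v' > v$, while symmetrically the witness $w$ for $v$ survives in $G_{v'}(X)$ exactly when $v > v'$; as these are mutually exclusive, one of the two supposed pendants in fact has degree at least two.

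The main obstacle I anticipate is the bookkeeping across the mixed $\Del[1]$--$\Del[2]$ case and the two self-pendant cases ($u = v$), where the ``other'' neighbour's survival must be tested in the \emph{correct} local structure. The key observation that closes each of these is that a pendant's unique neighbour always belongs to its own local structure, whereas the competing neighbour provably does not (because one of its $X$-neighbours falls outside the relevant $\cand{X}^{\le v}$), so the two neighbours are genuinely distinct and the required comparison $v \lessgtr v'$ cannot hold consistently in both local structures. I would note that this whole step appears to use only the definitions of $G_v(X)$, $\Del[1]{X}{v}$, $\Del[2]{X}{v}$ and the nesting $\cand{X}^{\le v} \subseteq \cand{X}^{\le v'}$ for $v \le v'$; any genuine dependence on the girth-nine hypothesis would have to enter through the neighbour-identification step, which I would re-examine carefully to confirm.
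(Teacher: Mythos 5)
Your proposal is correct and takes essentially the paper's route: both arguments reduce the bound to pairwise disjointness of the sets $Pen_v(X)$ together with the containment $Pen_v(X) \subseteq \cand{X}$, and both prove disjointness by pinning down a pendant's unique neighbour in $G_v(X)$ ($v$ itself in the $\Del[1]{X}{v}$ case, the private witness in the $\Del[2]{X}{v}$ case) and analysing which vertices survive into which local structure; the paper merely packages this more economically, comparing $v$ only with the next candidate $t = \max \cand{X}^{<v}$, showing a pendant of $G_v(X)$ is already \emph{isolated} in $G_t(X)$ (its unique neighbour is dominated by $v$, which has left the candidate prefix), and then letting the nesting $G_{v'}(X) \subseteq G_t(X)$ for $v' \le t$ rule out all later candidates at once, so the mixed and self-pendant cases you worry about never need separate treatment. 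One slip to fix in your details: the survival test is flipped --- the witness $w'$ for $v'$ survives in $G_v(X)$ exactly when $v' < v$, not $v' > v$, since the condition is that both of its dominators $u$ and $v'$ lie in $\cand{X}^{\le v}$; by the symmetry of your case split this is harmless and only changes which of the two local structures hosts the degree-two contradiction. Two further minor points: the $\Del[2]{X}{v}$ witness is permitted to be $v$ itself (it ranges over $V \setminus (X \setminus \set{v})$), in which case the pendant \emph{is} adjacent to $v$ and that case collapses into your $\Del[1]{X}{v}$ treatment; and your closing observation is right --- neither your argument nor the paper's uses the girth hypothesis for this lemma.
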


\begin{proof}
    Let $u$ be the largest vertex in $\cand{X}^{<v}$ and  
    $w$ be a vertex in $G_v(X) \cap \DelG{X}{v}$. 
    If $w\in\Del[1]{X}{v}$, 
    then $d'_u(w) = 0$ since $w \in N'_v(v)$. 
    Otherwise, $w\in\Del[2]{X}{v}$, then 
    $d'_u(w) = 0$ since a vertex $x$ such that $N'_v[x] = \set{w, v}$ is removed from $G_v(X)$. 
    Hence, $Pen_v(X) \cap Pen_u(X) = \emptyset$. 
    Moreover, for each $v\in\cand{X}$, $Pen_v(X)$ is a subset of $\cand{X}$. 
    Hence, the union of $Pen_v(X)$ is a subset of $\cand{X}$ for each $v\in\cand{X}$. 
\end{proof}

Let $v$ be a vertex in $\cand{X}$ and a pendant in $G_v(X)$.     
Since the number of such pendants is at most $\size{\cand{X}}$, 
the sum of degree of such pendants 
is at most $\size{\cand{X}}$ in each execution of \AllChildren without recursive calls. 
Hence, the cost of deleting such pendants is $\order{\size{\cand{X}}}$ time. 
Next, we consider the number of descendants of $X$. 
From Lemma~\ref{lem:pendant}, we can ignore such pendant vertices. 
Hence, for each $u \in \DelG{X}{v}$, 
we will assume that $d'_v(u) \ge 2$  below. 

\begin{figure}[t]
    \centering  
    \includegraphics[width=0.8\textwidth]{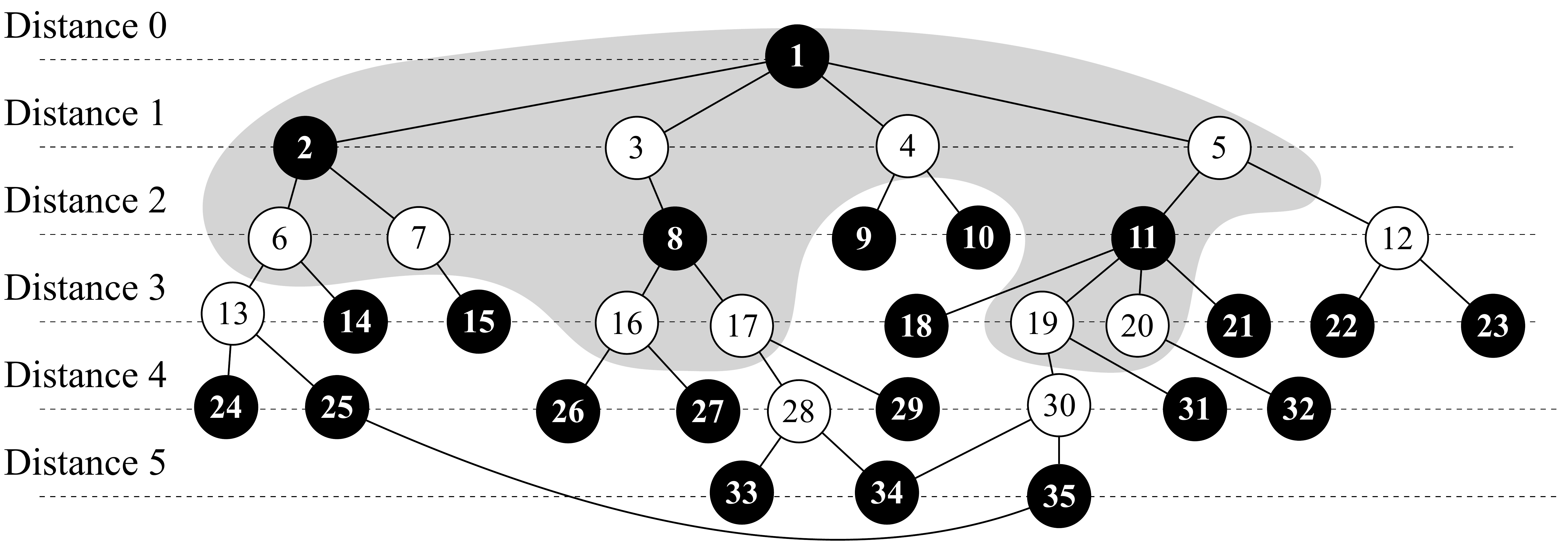}
    \caption{An example of $G_v(X)$, where $v = 1$. The vertices in the grey area are $\DelG{X}{v} \cup (G_v(X)\setminus G(Y)) \cup (N'_v(v)\setminus X)$.  
    Each horizontal line represents the distance between $1$ and any vertex. }
    \label{fig:girth}
\end{figure}

\begin{lemma}
\label{lem:cand}
    Let $X$ be a dominating set, $v$ be a vertex in $\cand{X}$, and $u$ be a vertex in $G_v(X)$.
    Then, $\size{N'_v[u] \cap \cand{X}^{\le v}} \ge 2$ if $u \notin \cand{X}$. 
    Otherwise, $\size{N'_v[u] \cap \cand{X}^{\le v}} \ge 1$. 
\end{lemma}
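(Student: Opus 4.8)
The plan is to split on whether $u$ lies in $\cand{X}$, handling the stated ``otherwise'' branch (the case $u \in \cand{X}$) first since it is immediate. First I would record a small fact about which candidate vertices survive in the local structure: any candidate $c$ with index larger than $v$ lies in $X \setminus \cand{X}^{\le v}$ (recall $\cand{X} \subseteq X$, which follows from the parent–child definition), hence $c \in N[X \setminus \cand{X}^{\le v}]$ and $c \notin \cand{X}^{\le v}$, so $c \notin V(G_v(X))$ by the definition $G_v(X) = G[(V \setminus N[X \setminus \cand{X}^{\le v}]) \cup \cand{X}^{\le v}]$. Consequently $u \in \cand{X} \cap V(G_v(X))$ forces $u \in \cand{X}^{\le v}$, and since $u \in N'_v[u]$ always holds, we get $u \in N'_v[u] \cap \cand{X}^{\le v}$, giving the bound $\ge 1$.

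For the main case $u \notin \cand{X}$, I would first pin down that $u \notin X$. Since $u \notin \cand{X}$ we have $u \notin \cand{X}^{\le v}$, so membership $u \in V(G_v(X))$ forces $u \in V \setminus N[X \setminus \cand{X}^{\le v}]$. In particular $u \notin X \setminus \cand{X}^{\le v}$, so if $u$ were in $X$ it would lie in $\cand{X}^{\le v} \subseteq \cand{X}$, contradicting $u \notin \cand{X}$. The very same membership also tells us $u$ has no neighbour in $X \setminus \cand{X}^{\le v}$, hence every neighbour of $u$ that belongs to $X$ in fact belongs to $\cand{X}^{\le v}$.

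The crux is then to exhibit two distinct such neighbours. Because $X$ dominates $G$ and $u \notin X$, the vertex $u$ has at least one neighbour $c \in X$, and by the previous step $c \in \cand{X}^{\le v}$. Since $c \in \cand{X}$, the set $X \setminus \set{c}$ is again a dominating set (this is exactly the defining property of the candidate set, $N[X \setminus \set{c}] = V$), so $u$ must remain dominated after deleting $c$; as $u \notin X \setminus \set{c}$, there is a further neighbour $c' \in X \setminus \set{c}$, and again $c' \in \cand{X}^{\le v}$. Thus $c$ and $c'$ are two distinct vertices of $N(u) \cap \cand{X}^{\le v}$, and since both lie in $G_v(X)$ and $G_v(X)$ is induced, both lie in $N'_v(u)$, yielding $\size{N'_v[u] \cap \cand{X}^{\le v}} \ge 2$.

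I expect the only delicate points to be the two membership bookkeeping arguments — that surviving candidate vertices have index at most $v$, and that $u \notin \cand{X}$ together with $u \in V(G_v(X))$ forces $u \notin X$ — both of which are routine unwindings of the definition of $G_v(X)$ and of $\cand{X} \subseteq X$. The genuinely load-bearing step is the candidate-removability argument: a non-solution vertex dominated by a candidate must have a second dominator, which is precisely what upgrades the trivial bound $1$ to the required bound $2$. Notably the girth hypothesis plays no role in this lemma and is reserved for the subsequent counting lemmas.
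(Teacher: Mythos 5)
Your proof is correct and takes essentially the same route as the paper's: both use the definition of $G_v(X)$ to conclude that every dominator of $u$ inside the local structure must lie in $\cand{X}^{\le v}$, then obtain one such neighbour from the fact that $X$ dominates $u$ and a second from the removability property defining $\cand{X}$ (the paper phrases this as a contradiction when the count is $0$ or $1$, you phrase it constructively). If anything, your bookkeeping is more careful than the paper's, which drops the $\le v$ superscripts and does not explicitly note that a candidate vertex surviving in $G_v(X)$ must have index at most $v$.
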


\begin{proof}
    If $u \in\cand{X}$, 
        then $u \in N'[u]\cap \cand{X}$. 
    We assume that $u \notin \cand{X}$.
    Thus, 
    $N'[u] \cap (X \setminus \cand{X}) = \emptyset$ from the definition of $G(X)$. 
    If $\size{N'[u] \cap \cand{X}} = 0$, then $u$ is not dominated by any vertex. 
    This contradicts $X$ is dominating set. 
    If $\size{N'[u] \cap \cand{X}} = 1$, 
        then $u$ is dominated only by the neighbor $w$ of $u$ in $\cand{X}$. 
    This contradicts $w \in \cand{X}$. 
    Hence, $\size{N[v] \cap \cand{X}} \ge 2$ if $v \notin\cand{X}$. 
\end{proof}

\begin{lemma}
\label{lem:child2:v}
    Let $X$ be a dominating set, $v$ be a vertex in $\cand{X}$, and $Y$ be a dominating set $X\setminus\set{v}$. 
    Then, $\size{\cand{Y}}$ is at least $\size{(N'_v(v)\cap X)\setminus\DelG{X}{v}}$. 
\end{lemma}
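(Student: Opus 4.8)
The plan is to prove the stronger set-theoretic statement that $(N'_v(v) \cap X) \setminus \DelG{X}{v} \subseteq \cand{Y}$; taking cardinalities then yields the claimed inequality immediately. So it suffices to show that every vertex $u$ lying in $N'_v(v) \cap X$ but outside $\DelG{X}{v}$ is a candidate vertex of $Y$.

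First I would obtain a clean description of $\cand{Y}$. By Lemma~\ref{lem:subset} we have $\cand{Y} \subseteq \cand{X}$, so that $\cand{Y} = \cand{X} \setminus \Del{X}{v}$, and by Lemma~\ref{lem:del} we have $\Del{X}{v} = \Del[1]{X}{v} \cup \Del[2]{X}{v} \cup \Del[3]{X}{v}$. Since $\Del[3]{X}{v} = \cand{X}^{v\le}$ while $\Del[1]{X}{v}, \Del[2]{X}{v} \subseteq \cand{X}^{<v}$, removing $\Del[3]{X}{v}$ from $\cand{X}$ leaves exactly $\cand{X}^{<v}$, and hence $\cand{Y} = \cand{X}^{<v} \setminus (\Del[1]{X}{v} \cup \Del[2]{X}{v})$.

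The main step — and the only place the local structure enters — is to show that $N'_v(v) \cap X \subseteq \cand{X}^{<v}$. Fix $u \in N'_v(v) \cap X$. Being a neighbour of $v$ in $G_v(X)$, the vertex $u$ lies in $V(G_v(X)) = (V \setminus N[X \setminus \cand{X}^{\le v}]) \cup \cand{X}^{\le v}$ and satisfies $u \neq v$. If $u \in \cand{X}^{\le v}$ we are done; otherwise $u \in V \setminus N[X \setminus \cand{X}^{\le v}]$, so $u \notin X \setminus \cand{X}^{\le v}$, and combined with $u \in X$ (and $\cand{X}^{\le v} \subseteq \cand{X} \subseteq X$) this forces $u \in \cand{X}^{\le v}$ as well. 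Since $u \neq v$, in fact $u \in \cand{X}^{<v}$.

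Finally I would combine the two steps. If $u \in (N'_v(v) \cap X) \setminus \DelG{X}{v}$, then $u \in \cand{X}^{<v}$ by the main step, while $u \notin \Del[1]{X}{v} \cup \Del[2]{X}{v}$ because $\DelG{X}{v} \supseteq \Del[1]{X}{v} \cup \Del[2]{X}{v}$. Hence $u \in \cand{X}^{<v} \setminus (\Del[1]{X}{v} \cup \Del[2]{X}{v}) = \cand{Y}$, which proves the inclusion and therefore the size bound. I expect the only delicate point to be the local-structure step: one must argue carefully that a vertex of $X$ surviving into $G_v(X)$ can only be a candidate vertex of index at most $v$, using that $G_v(X)$ deletes the entire closed neighbourhood of $X \setminus \cand{X}^{\le v}$; everything else is bookkeeping with the decomposition of $\Del{X}{v}$.
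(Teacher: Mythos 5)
Your proof is correct and takes essentially the same route as the paper's: in both, the key observation is that a vertex of $X$ that survives into $G_v(X)$ must lie in $\cand{X}^{\le v}$ (because $G_v(X)$ excludes all of $N[X \setminus \cand{X}^{\le v}]$, and every vertex of $X \setminus \cand{X}^{\le v}$ dominates itself), after which $u \notin \DelG{X}{v}$ yields $u \in \cand{Y}$. If anything, your write-up is slightly more careful than the paper's, since you make explicit the decomposition $\cand{Y} = \cand{X}^{<v}\setminus\left(\Del[1]{X}{v}\cup\Del[2]{X}{v}\right)$, which is what rules out membership in $\Del[3]{X}{v}$ — a point the paper's proof leaves implicit.
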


\begin{proof}
    Let $u$ be a vertex in $(N'_v(v)\cap X)\setminus\DelG{X}{v}$. 
    If $u \in \cand{X}$, 
    then $u$ is also a candidate vertex in $\cand{Y}$ since $u \notin \DelG{X}{v}$. 
    Suppose that $u \notin \cand{X}$. 
    Since $u \in G_v(X)$, 
    $u$ is dominated by only candidate vertices of $X$.
    However, since $u \in X$, $u$ dominates it self and thus, this contradicts. 
    Hence, the statement holds. 
\end{proof}

\begin{lemma}
\label{lem:child2:neighbor:v}
    Let $X$ be a dominating set, $v$ be a vertex in $\cand{X}$, 
    and $Y$ be a dominating set $X\setminus\set{v}$. 
    Then, $\size{\cand{Y}}$ is at least
    $\sum_{u\in N'_v(v)\setminus X}(d'_v(u) - 1)$. 
\end{lemma}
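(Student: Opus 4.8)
The plan is to exhibit an injection into $\cand{Y}$ from the set that the right-hand side counts. First I would rewrite the sum: for each $u\in N'_v(v)\setminus X$ the term $d'_v(u)-1$ is the number of neighbours of $u$ in $G_v(X)$ other than $v$. Because the girth is at least nine there is no triangle on $\set{v,u,w}$ and no four-cycle through $v$, so any two distinct $u_1,u_2\in N'_v(v)$ have only $v$ as a common neighbour; hence the sets $N'_v(u)\setminus\set{v}$ are pairwise disjoint, avoid $v$, and contain no other neighbour of $v$. Their disjoint union $W$ therefore has size exactly $\sum_{u\in N'_v(v)\setminus X}(d'_v(u)-1)$, and it suffices to inject $W$ into $\cand{Y}$.

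Next, for each $w\in W$, reached from $v$ through its unique $u\in N'_v(v)\setminus X$, I would assign a candidate vertex $\phi(w)\in\cand{Y}$ lying in the branch of the locally tree-like structure hanging below $w$. By Lemma~\ref{lem:cand} every vertex of $G_v(X)$ outside $X$ has at least two neighbours in $\cand{X}^{\le v}$, so domination inside the local structure is always witnessed by candidate vertices. I would set $\phi(w)=w$ when $w$ is a candidate that survives the deletion of $v$, and otherwise route $\phi(w)$ to a candidate neighbour of $w$ one step further from $v$. The point to verify is that $\phi(w)$ really lies in $\cand{Y}=\cand{X}^{<v}\setminus(\Del[1]{X}{v}\cup\Del[2]{X}{v})$: a vertex at distance at least two from $v$ is not adjacent to $v$, so by Lemma~\ref{lem:cand:dsg:del1} it cannot belong to $\Del[1]{X}{v}$; and by Lemma~\ref{lem:cand:dsg:del2} membership in $\Del[2]{X}{v}$ would require a vertex whose only $X$-neighbours are $\phi(w)$ and $v$, forcing a short cycle through $v$, which girth at least nine forbids.

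The delicate boundary case is a candidate $w$ that is itself deleted into $\Del[2]{X}{v}$ because its generating vertex $u$ has only $v$ and $w$ as candidate neighbours. These are exactly the pendant configurations removed from $G_v(X)$ by Lemma~\ref{lem:pendant}, so under the standing assumption $d'_v(\cdot)\ge 2$ such a $w$ still has a neighbour away from $v$, and $\phi$ can be routed one step further into its subtree. For injectivity I would use that girth at least nine makes the ball of radius four around $v$ in $G_v(X)$ a tree: the vertices of $W$ lie in pairwise disjoint subtrees, $\phi(w)$ is always chosen inside the subtree rooted at $w$, and no image can coincide with another $w'$ or with $v$ without closing a cycle shorter than nine. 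This yields $\size{\cand{Y}}\ge\size{W}$, which is the claim.

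I expect the main obstacle to be the survival argument, namely proving $\phi(w)\notin\Del[1]{X}{v}\cup\Del[2]{X}{v}$ in every case: this is precisely where the girth bound is used, since any witness placing $\phi(w)$ in $\Del[2]{X}{v}$ would close a cycle of length less than nine through $v$, $u$, $w$, and the witness. Dispatching the pendant boundary cases cleanly, so that the assumption $d'_v(\cdot)\ge 2$ is legitimately available for the vertices to which $\phi$ routes, is the second point requiring care.
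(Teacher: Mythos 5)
Your proposal is correct in substance and takes essentially the same route as the paper's proof: for each $u \in N'_v(v)\setminus X$ and each of the $d'_v(u)-1$ neighbours $w \neq v$ of $u$, both arguments exhibit a surviving candidate in the branch hanging below $w$ (either $w$ itself or a vertex slightly deeper), with girth at least nine used exactly as you use it — to rule out membership in $\Del[1]{X}{v}\cup\Del[2]{X}{v}$ and to make the chosen candidates pairwise distinct — so your injection $\phi$ is the paper's case analysis (A)/(B) in different packaging. The one repair needed is to your claim that the routing goes only \emph{one} step below $w$: when $w \in \Del[2]{X}{v}$ (witnessed by $u$) and the other neighbour $x$ of $w$ in $G_v(X)$ lies outside $X$, then $w$ has no candidate neighbour at all, and you must apply Lemma~\ref{lem:cand} at $x$ to reach a candidate $y$ two steps below $w$ — precisely the sub-case $x \notin \cand{X}$ of the paper's case (A); since your injectivity argument already tolerates images at distance four from $v$ (with girth exactly nine the radius-four ball around $v$ need not literally be a tree, but the absence of cycles of length at most eight gives the uniqueness of paths of length at most four that you actually use), this fix costs nothing.
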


\begin{proof}
    Let $u$ be a vertex in $N'_v(v) \setminus X$.  
    That is, $u \notin \cand{X}$ and $N'_v(u) \subseteq \cand{X}$. 
    Thus,  from Lemma~\ref{lem:cand}, 
        there is a vertex $w\in N'_v(u)$ such that $w < v$.
    We consider the following two cases: 
    (A) If $N'_v(u) = \set{v, w}$, 
        then $w \in \DelG{X}{v}$. 
    From the assumption, 
        $w$ has at least one neighbor $x$ such that $x \neq u$. 
    If $x \notin \cand{X}$, 
        then there is a neighbor $y \in \cand{X}$ such that $y \neq w$. 
    Suppose that $y \in \DelG{X}{v}$. 
    This implies that there is a cycle with length at most six. 
    This contradicts the girth of $G$. 
    Hence, $y \notin \DelG{X}{v}$ and 
        $Y\setminus\set{y}$ is a dominating set. 
    If $x \in \cand{X}$, 
        then $x \notin \DelG{X}{v}$ from the definition of $\DelG{X}{v}$ and the girth of $G$. 
    Hence, 
        $Y\setminus\set{x}$ is a dominating set. 
    (B) Suppose $N'_v(u)$ has a vertex $z \in \cand{X}$ such that $z \neq v$ and $z \neq w$. 
    If both $z$ and $w$ are in $\DelG{X}{v}$, 
        then from the definition of $\DelG{X}{v}$ and the girth of $G$, 
            $G$ has a cycle with length at most five.
    Thus, without loss of generality, 
        we can assume that $z \notin \DelG{X}{v}$. 
    This allows us to generate a child $Y \setminus \set{z}$ of $Y$. 
    Since the girth of $G$ is at least nine, 
        all children of $Y$ generated above are mutually distinct. 
    Hence, the statement holds. 
\end{proof}

\begin{lemma}
\label{lem:child2:del}
    Let $X$ be a dominating set, 
        $v$ be a vertex in $\cand{X}$, 
        and $Y$ be a dominating set $X\setminus\set{v}$. 
    Then, $\size{\cand{Y}}$ is at least
    $\sum_{u \in \DelG{X}{v}\setminus\set{v}} \left(d'_v(u) - 1\right)$. 
\end{lemma}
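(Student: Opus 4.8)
The plan is to construct, for each $u \in \DelG{X}{v}\setminus\set{v} = \Del[1]{X}{v}\cup\Del[2]{X}{v}$, a set $S_u$ of $d'_v(u)-1$ candidate vertices of $Y$, and then to prove that the sets $S_u$ are pairwise disjoint; summing $\sum_u\size{S_u}\le\size{\cand{Y}}$ yields the claim. Every vertex I touch lies within distance four of $v$ in $G_v(X)$, and because the girth of $G$ is at least nine every vertex within distance four has a unique shortest path to $v$ (two of them would close a cycle of length at most eight). I may therefore root a shortest-path tree at $v$ and speak of the depth of a vertex and of its parent. Recalling from Lemma~\ref{lem:del} that $\cand{Y}=\cand{X}\setminus\Del{X}{v}$ and that $\Del[3]{X}{v}=\cand{X}^{v\le}$, a vertex is a candidate of $Y$ exactly when it is a candidate of $X$ below $v$ that avoids $\Del[1]{X}{v}\cup\Del[2]{X}{v}$. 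The structural key is that $\DelG{X}{v}\setminus\set{v}$ lives at shallow depth: a $\Del[1]{X}{v}$-vertex is adjacent to $v$ (depth one) and a $\Del[2]{X}{v}$-vertex reaches $v$ through its defining witness (depth two).

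For the construction I fix $u \in \DelG{X}{v}\setminus\set{v}$. In the rooted tree $u$ has a unique neighbour toward the root — this is $v$ when $u\in\Del[1]{X}{v}$ and the witness of $u$ when $u\in\Del[2]{X}{v}$ — so the remaining $d'_v(u)-1$ neighbours of $u$ all lie strictly deeper. I treat each downward neighbour $w$ in turn. If $w$ is a candidate of $X$, then (this forces $u\in\Del[2]{X}{v}$ and $w$ at depth three, since the downward neighbours of a $\Del[1]{X}{v}$-vertex lie outside $X$ by the very definition of $\Del[1]{X}{v}$) we have $w\in\cand{X}^{\le v}$ because $w\in G_v(X)$, hence $w<v$, and a depth count places $w$ outside $\Del[1]{X}{v}\cup\Del[2]{X}{v}$; thus $w\in\cand{Y}$ and I add $w$ to $S_u$. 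Otherwise $w\notin X$, and Lemma~\ref{lem:cand} supplies a neighbour $z\neq u$ of $w$ in $\cand{X}^{\le v}$; this $z$ is a candidate of $X$ below $v$ sitting one level deeper than $w$, hence again outside $\Del[1]{X}{v}\cup\Del[2]{X}{v}$ by depth, so $z\in\cand{Y}$ and I add $z$ to $S_u$. Each downward neighbour contributes one element of $\cand{Y}$, so $\size{S_u}=d'_v(u)-1$ (and the case $d'_v(u)=1$ is vacuous, consistent with the pendant discussion).

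It remains to argue disjointness, which is where the girth is really spent and which I expect to be the main obstacle. Two elements chosen for the same $u$ coincide only if a single candidate is reachable from $u$ through two distinct downward neighbours, and elements chosen for different $u,u'$ coincide only if a single candidate has two internally disjoint paths of combined length at most eight back to $v$; in every such collision the two paths close into a cycle of length at most eight, contradicting girth at least nine. The care needed is precisely the bookkeeping here: one must check that each manufactured candidate stays within distance four of $v$, that on any path used the only vertex of $\DelG{X}{v}\setminus\set{v}$ is the intended $u$ (so no ancestor/descendant collision occurs), and that every potential coincidence — between two downward neighbours of one vertex, and between the vertices attached to two different elements of $\DelG{X}{v}\setminus\set{v}$ — forces a cycle below length nine. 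As in the proof of Lemma~\ref{lem:child2:neighbor:v}, this is discharged by a short case analysis on the relative positions of the vertices in the local tree, after which $\sum_{u\in\DelG{X}{v}\setminus\set{v}}(d'_v(u)-1)=\sum_u\size{S_u}\le\size{\cand{Y}}$.
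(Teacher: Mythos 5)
Your proposal is correct and follows essentially the same route as the paper's own proof: for each $u \in \DelG{X}{v}\setminus\set{v}$ you charge every neighbour except the one toward $v$ to a candidate of $Y$ --- the neighbour itself when it lies in $\cand{X}$ (the paper's case ``$w \in G(Y)$''), and otherwise a second candidate-neighbour supplied by Lemma~\ref{lem:cand} --- and you then spend the girth-at-least-nine hypothesis to show the charged candidates are pairwise distinct, exactly as the paper does with its explicit short-cycle constructions. Your shortest-path-tree/depth bookkeeping is just a cleaner packaging of those same cycle arguments, with the added benefit of making explicit which neighbour accounts for the $-1$ (the tree parent: $v$ itself, or the witness defining $u \in \Del[2]{X}{v}$), a point the paper leaves implicit.
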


\begin{proof}
    Let $u$ be a vertex in $\DelG{X}{v}\setminus\set{v}$. 
    From the assumption, there is a neighbor $w$ of $u$ in $G(X)$. 
    We consider the following two cases: 
    (A) Suppose that $w$ is in $G(Y)$.
    Since $u$ is in $Y \setminus \cand{Y}$,  $w \in \cand{Y}$. 
    Hence, $Y\setminus\set{w}$ is a child of $Y$. 
    Suppose that for any two distinct vertices $x, y$ in $\DelG{X}{v}\setminus\set{v}$, 
        they have a common neighbor $w'$ in $G(Y)$. 
    If both $x$ and $y$ are in $\Del[2]{X}{v}$, 
        then there exist two vertex $z_x, z_y$ such that $N'_v[z_x]\cap X = \set{x, v}$ and 
        $N'_v[z_y] \cap X = \set{y, v}$, respectively. 
    Therefore, 
        there is a cycle $(v, z_x, x, w', y, z_y, v)$ with length six.
    As with the above, 
        if $x$ or $y$ in $\Del[1]{X}{v}$, 
            then there exists a cycle with length less than six since $\set{x, v} \in G$ or $\set{x, v} \in G$. 
    This contradicts of the assumption of the girth of $G$. 
    Hence, any pair vertices in $\DelG{X}{v}$ has no common neighbors.
    Thus, in this case, all grandchildren of $X$ are mutually distinct. 
    (B) Suppose that $w$ is not in $G(Y)$. 
    Thus, if $w \in \cand{X}$, 
        then $w \in \DelG{X}{v}$. 
    This implies that there is a cycle including $w$ and $u$ whose length is less than six. 
    Hence,  $w$ is not in $\cand{X}$. 
    Then, from Lemma~\ref{lem:cand}, 
        there is a vertex $z$ in $N'_v(w) \cap \cand{X}$ such that $z \neq u$. 
    Since $u \in \DelG{X}{v}\setminus\set{v}$, 
        there is an edge between $u$ and $v$, or
        there is a vertex $c$ such that $\set{u, c}$ and $\set{v, c}$ are in $G_v(X)$. 
    Again, if $z$ is in $\DelG{X}{v}$, 
        then there is a cycle with length less than seven. 
    Thus, $z$ still belongs to $\cand{Y}$ and $X\setminus\set{v, z}$ is a dominating set. 
    Next, we consider the uniqueness of $X \setminus\set{v, z}$. 
    If there is a vertex $w'$ such that $w' \in N'_v(u)$, $w' \neq w$, $w$ and $w'$ share a common neighbor $u'$ other than $u$, 
    then $(u, w, u', w')$ is a cycle. 
    Hence, any pair neighbors of $u$ has no common neighbors.
    As with the above, any two distinct vertices in $\DelG{X}{v}\setminus\set{v}$ also has no common vertex like $z$. 
    If there are two distinct vertex $u, u'\in \DelG{X}{v}$ such that $u$ and $u'$ has a common vertex like $z$, 
    then there is a cycle with length at most eight even if $u, u'\in\Del[2]{X}{v}$.
    This contradicts the assumption of the girth, and thus, the statement holds. 
\end{proof}

\begin{lemma}
\label{lem:child3}
    Let $X$ be a dominating set $v$ be a vertex in $\cand{X}$, and $Y$ be a dominating set $X\setminus\set{v}$. 
    Then, the number of children and grandchildren of $Y$ is at least 
    $\sum_{u \in 
            G_v(X)\setminus
            \left(G(Y)\cup\DelG{X}{v}\cup N'_v(v) \right)
        } 
        \left(d'_v(u) - 1\right)$. 
\end{lemma}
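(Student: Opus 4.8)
The plan is to charge each far, absorbed vertex in the index set to a batch of distinct children and grandchildren of $Y$, using that girth at least nine makes the relevant ball around $v$ acyclic. First I would pin down the vertices appearing in the sum. Let $u \in G_v(X) \setminus (G(Y) \cup \DelG{X}{v} \cup N'_v(v))$. Since $u \in G_v(X)$ together with $u \in \cand{X}$ would force $u \in \cand{X}^{\le v}$ and hence, as $u \notin \DelG{X}{v}$, survival in $\cand{Y}$ and membership in $G(Y)$, a short computation shows $u \notin \cand{X}$, so in fact $u \in V \setminus X$. By Lemma~\ref{lem:cand}, $u$ therefore has at least two neighbours in $\cand{X}^{\le v}$, so $d'_v(u) \ge 2$. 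Moreover $u$ is removed when passing from $G_v(X)$ to $G(Y)$ only because it acquires a dominator in $Y \setminus \cand{Y} = (X \setminus \cand{X}) \cup (\Del{X}{v} \setminus \set{v})$; since $u$ has no neighbour in $X \setminus \cand{X}^{\le v}$, this dominator must lie in $\Del[1]{X}{v} \cup \Del[2]{X}{v} = \DelG{X}{v} \setminus \set{v}$ and be adjacent to $u$.

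Next I would exploit the local tree structure. Because the girth is at least nine, the subgraph of $G_v(X)$ spanned by the vertices touched in the construction is acyclic, so I may root it at $v$; the absorbing dominator $w \in \DelG{X}{v} \setminus \set{v}$ identified above is then the unique neighbour of $u$ on the path toward $v$, and the remaining $d'_v(u) - 1$ neighbours of $u$ point away from $v$. For each such away-neighbour $c$ I construct one descendant of $Y$: if $c$ survives as a candidate, i.e. $c \in \cand{Y}$, then $Y \setminus \set{c}$ is a child of $Y$; otherwise, a surviving candidate $z \neq u$ exists within distance two of $c$ (any failure to survive would, by the definitions of $\Del[1]{X}{v}$ and $\Del[2]{X}{v}$ together with Lemma~\ref{lem:cand}, produce a cycle through $v$ of length at most eight), and removing $c$ and then $z$ yields a grandchild $Y \setminus \set{c, z}$ of $Y$. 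This split is exactly where children and grandchildren both occur.

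The remaining work is to verify that the $\sum_{u}(d'_v(u) - 1)$ descendants so produced are pairwise distinct, and this is the main obstacle. I would argue by contradiction: any coincidence among two of the removed (pairs of) candidates supplies two internally disjoint short paths between vertices of the rooted tree, hence a cycle of length at most eight, contradicting girth at least nine. The bookkeeping here is delicate precisely because the construction reaches two removals deep, so the candidate cycles one must rule out are longer than those in Lemmas~\ref{lem:child2:v}--\ref{lem:child2:del}; this is exactly why girth nine, rather than seven or eight, is the threshold that makes the argument go through. Summing the resulting injection over all $u$ in the index set then gives the claimed lower bound.
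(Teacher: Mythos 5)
Your first paragraph is sound and matches the paper's opening: indeed $u \notin X$, Lemma~\ref{lem:cand} gives $d'_v(u) \ge 2$, and the vertex that absorbs $u$ when passing from $G_v(X)$ to $G(Y)$ is a neighbour of $u$ lying in $\Del[1]{X}{v} \cup \Del[2]{X}{v}$. The trouble starts with your charging scheme. In your ``otherwise'' branch the away-neighbour $c$ satisfies $c \notin X$ (the only other possibility, $c \in \DelG{X}{v}$, is excluded by a short-cycle argument, and $c \in \cand{X}\setminus\DelG{X}{v}$ is your first branch), hence $c \notin Y$; so $Y \setminus \set{c,z}$ is \emph{not} a grandchild of $Y$ --- it is simply the child $Y \setminus \set{z}$, since one cannot remove $c$ from $Y$ at all. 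Thus what your scheme really produces, for every pair $(u,c)$, is a single child of $Y$ obtained by removing one vertex: either $c$ itself or a surviving candidate $z_c$ adjacent to $c$. Distinctness must therefore hold among a family of \emph{singleton} removals ranging over all $u$ in the index set and all their away-neighbours.

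This is where the argument genuinely breaks. Your distinctness claim rests on the assertion that the subgraph spanned by the touched vertices is acyclic, equivalently that any coincidence closes a cycle of length at most eight. Girth nine guarantees neither: the touched vertices reach distance four or five from $v$, and girth nine only forces the ball of radius three around $v$ to be a tree (a $9$-cycle through $v$ sits entirely inside the radius-four ball). Concretely, take $u \ne u'$ in the index set whose absorbing dominators $w, w'$ lie in $\Del[2]{X}{v}$, with witnesses $y, y'$ adjacent to $v$; let $c \notin X$ be an away-neighbour of $u$ and let $z$ be a surviving candidate adjacent to $c$ that is also an away-neighbour of $u'$. Your scheme then charges the same child $Y\setminus\set{z}$ both to $(u,c)$ and to $(u',z)$, while the configuration only forces the nine-cycle $v\, y\, w\, u\, c\, z\, u'\, w'\, y'$, which girth nine permits (two case-(c) pairs sharing their candidate $z$ collide while forcing only a ten-cycle). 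The paper's proof avoids exactly this trap by \emph{anchoring}: for each $u$ it fixes one candidate neighbour $w(u) \notin \DelG{X}{v}$ and charges $u$'s whole quota to the child $Y\setminus\set{w(u)}$ and to grandchildren of the form $Y\setminus\set{w(u),\cdot}$, where the second removed vertex is within distance three of $u$; any coincidence between descendants charged to $u$ and to $u'$ then closes a cycle of length at most eight, which girth nine does forbid. Your per-neighbour, unanchored charging would need girth at least ten, or a collision-avoiding selection rule that you have not supplied, so the proof as written has a real gap.
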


\begin{proof}
    Let $u$ be a vertex in $G_v(X) \setminus (G(Y)\cup\DelG{X}{v}\cup N'_v(v))$.
    Since $u \notin \DelG{X}{v}$ and $u \in G_v(X)\setminus G(Y)$, 
        $u$ is not in $X$.
    Since $\size{N'_v(u)\cap \cand{X}^{\le v}}$ is greater than or equal to two from Lemma~\ref{lem:cand}, 
        there are two distinct vertices $w, w'$ in $N'_v(u)$. 
    We assume that $w, w' \in\DelG{X}{v}$. 
    From Lemma~\ref{lem:del}, the distance between $w$ and $v$ is at most two. 
    Similarly, the distance between $w'$ and $v$ is at most two. 
    Hence, there is a cycle with the length at most six since $w \neq v$ and $w' \neq v$. 
    Thus, without loss of generality, 
        we can assume that $w \notin\DelG{X}{v}$.     
    (A) Suppose that $\size{N'_v(u)} = 2$. 
    If there is a vertex $u' \in G_v(X)\setminus(G(Y) \cup \DelG{X}{v}\cup N'_v(v))$ 
        such that $u' \neq u$ and $w \in N'(u)$, 
        then as with Lemma~\ref{lem:child2:neighbor:v}, 
            there is a short cycle. 
    Hence, 
        for each vertex such as $u$, 
            there is a corresponding dominating set $X \setminus \set{v, w}$. 
    (B) Suppose that there is a neighbor $w'' \in N'_v(u) \cap \cand{X}$. 
    Then, as mentioned in above, 
        there is a dominating set $X\setminus\set{v, w, w''}$. 
    In addition, 
        by the same discussion as Lemma~\ref{lem:child2:del}, 
            such generated dominating sets are mutually distinct. 
    (C)  Suppose that there is a neighbor $w'' \in N'_v(u) \setminus \cand{X}$. 
    From Lemma~\ref{lem:cand}, 
        there are two vertices $z, z' \in N'(w'')\cap\cand{X}$. 
    Then,  $z \notin\DelG{X}{v}$ or $z' \notin\DelG{X}{v}$, 
            and thus, we can assume that $z \notin\DelG{X}{v}$. 
    Therefore, there is a dominating set $X\setminus\set{v, w, z}$. 
    Next, we consider the uniqueness of grandchildren of $Y$. 
    Moreover, 
    if there is a vertex $u'$ such that $w, y \in N'(u')$ holds, 
        such that $z\in N'(y)$. 
    Then, there is a cycle $(u, w, u', y, z, w'')$ with the length six. 
    Hence, grandchildren of $Y$ are mutually distinct for each $u \in G(X)\setminus G(Y)\setminus\DelG{X}{v}$. 
    Thus, from (A), (B), and (C), the statement holds.
\end{proof}

Note that for any pair of candidate vertices $v$ and $v'$, 
$X\setminus\set{v}$ and $X\setminus\set{v'}$ do not share their descendants.
Thus, 
from Lemma~\ref{lem:child2:v}, Lemma~\ref{lem:child2:neighbor:v}, Lemma~\ref{lem:child2:del}, and Lemma~\ref{lem:child3}, 
we can obtain the following lemma: 

\begin{lemma}
\label{lem:child}
    Let $X$ be a dominating set.
    Then, the sum of the number of $X$'s children, grandchildren, and great-grandchildren 
    is bounded by the following order:  
    \begin{equation}
    \label{eq:lower}
     \bigomega{
     \size{\cand{X}} +
     \sum_{v\in\cand{X}}\left(
            \sum_{u \in \DelG{X}{v}}d'_v(u) + 
            \sum_{u \in G_v(X) \setminus G(Y)}d'_v(u) + 
            \sum_{u\in N'_v(v)\setminus X}d'_v(u)
        \right)}. 
    \end{equation}
\end{lemma}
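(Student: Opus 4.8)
The plan is to observe that the three sums inside \eqref{eq:lower} are, for each fixed $v\in\cand{X}$ with $Y=X\setminus\set{v}$, exactly the per-child work bound \eqref{eq:upper} of Lemma~\ref{lem:dsg:delte}; so it suffices to charge every unit of this work, up to a constant factor, either to the extra $\size{\cand{X}}$ term or to a descendant of $X$ within three generations, making sure that no descendant is charged more than a constant number of times. First I would dispose of the degree-one vertices of $G_v(X)$: by Lemma~\ref{lem:pendant} the set $Pen_v(X)$ of pendant vertices in $\DelG{X}{v}$ satisfies $\sum_{v\in\cand{X}}\size{Pen_v(X)}\le\size{\cand{X}}$, so their total contribution $\sum_v\sum_{u\in Pen_v(X)}d'_v(u)$ is absorbed into the $\size{\cand{X}}$ summand. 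For every remaining vertex $u$ occurring in any of the three sums I would invoke Lemma~\ref{lem:cand} (which forces $d'_v(u)\ge 2$ whenever $u\notin\cand{X}$, covering the sets $G_v(X)\setminus G(Y)$ and $N'_v(v)\setminus X$) to get $d'_v(u)\le 2\bigl(d'_v(u)-1\bigr)$; this lets me replace each $d'_v(u)$ by $d'_v(u)-1$ at the cost of a global factor of two.

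Once the sums carry the $d'_v(u)-1$ weights, I would match each of the three to a separate batch of descendants. The sum over $N'_v(v)\setminus X$ is bounded by the number of children of $Y$ via Lemma~\ref{lem:child2:neighbor:v}; the sum over $\DelG{X}{v}\setminus\set{v}$ by Lemma~\ref{lem:child2:del}; and the residual sum over $G_v(X)\setminus\bigl(G(Y)\cup\DelG{X}{v}\cup N'_v(v)\bigr)$ — which is the part of $G_v(X)\setminus G(Y)$ not already counted — by Lemma~\ref{lem:child3}, this last one contributing children and grandchildren of $Y$, that is, grandchildren and great-grandchildren of $X$. The only stray term is $d'_v(v)$ itself: splitting $N'_v(v)=\bigl(N'_v(v)\cap X\bigr)\cup\bigl(N'_v(v)\setminus X\bigr)$, the part $\bigl(N'_v(v)\cap X\bigr)\setminus\DelG{X}{v}$ is counted by Lemma~\ref{lem:child2:v}, the part inside $\DelG{X}{v}$ is already covered by Lemma~\ref{lem:child2:del}, and the part outside $X$ by Lemma~\ref{lem:child2:neighbor:v}.

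I would then sum these per-$v$ inequalities over all $v\in\cand{X}$. Here the note preceding the statement is essential: for distinct $v,v'$ the dominating sets $X\setminus\set{v}$ and $X\setminus\set{v'}$ share no descendants, so the grandchildren and great-grandchildren produced for different $v$ are automatically distinct and the charges do not collide across the outer sum.

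The hard part will be the within-$v$ bookkeeping of distinctness: guaranteeing that the descendants witnessed by the four structural lemmas for a single $v$ — which all produce children of $Y$ (hence grandchildren of $X$) obtained by deleting vertices that lie close to $v$ — are mutually distinct, so that their lower bounds may legitimately be added rather than merely taken as separate bounds on the one quantity $\size{\cand{Y}}$. This is precisely where the hypothesis that the girth is at least nine does the work: each of Lemmas~\ref{lem:child2:neighbor:v}, \ref{lem:child2:del}, and \ref{lem:child3} rules out, through the absence of short cycles, the coincidences (common neighbours, or shared witnesses $w$ with $N'_v[w]\cap X=\set{u,v}$) that would otherwise let two different deletions yield the same dominating set. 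Collecting the constant factors from the pendant absorption, the $d'_v(u)\le 2\bigl(d'_v(u)-1\bigr)$ step, and the bounded charging multiplicity then yields the stated $\bigomega{\cdot}$ bound.
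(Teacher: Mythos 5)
Your proposal follows essentially the same route as the paper: the paper also obtains the lemma by combining Lemma~\ref{lem:pendant} (absorbing pendant contributions into the $\size{\cand{X}}$ term, with the ensuing assumption $d'_v(u)\ge 2$), the per-$v$ bounds of Lemmas~\ref{lem:child2:v}, \ref{lem:child2:neighbor:v}, \ref{lem:child2:del}, and \ref{lem:child3}, and the observation that $X\setminus\set{v}$ and $X\setminus\set{v'}$ share no descendants for distinct $v,v'$. Your bookkeeping, including the treatment of the stray term $d'_v(v)$ and the splitting of $G_v(X)\setminus G(Y)$ into the parts covered by the respective lemmas, is correct.

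One remark: the step you single out as the hard part --- showing that the children of $Y$ witnessed by the \emph{different} structural lemmas are mutually distinct so that the four lower bounds may be added --- is not actually needed, and you should not leave the proof resting on that unproven claim. Lemmas~\ref{lem:child2:v}, \ref{lem:child2:neighbor:v}, and \ref{lem:child2:del} each bound the single quantity $\size{\cand{Y}}$ from below, and Lemma~\ref{lem:child3} bounds the number of children plus grandchildren of $Y$; since only a constant number (four) of bounds are combined, the number of descendants of $Y$ within two generations is at least the maximum of the four bounds, hence at least one quarter of their sum. This loses only a constant factor, which the $\bigomega{\cdot}$ absorbs. The girth hypothesis is needed only \emph{inside} each of those lemmas, to make the descendants counted by a single lemma mutually distinct; no cross-lemma disjointness argument is required. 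With that averaging observation substituted for your final paragraph, your proof is complete and matches the paper's.
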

    
From Lemma~\ref{lem:dsg:delte}, Lemma~\ref{lem:pendant}, and Lemma~\ref{lem:child}, 
each iteration outputs a solution in constant amortized time. 
Hence, by the same discussion of Theorem~\ref{theo:dsd:time},  
we can obtain the following theorem. 

\begin{theorem}
\label{theo:girth}
    For an input graph with girth at least nine, 
    \EnumDSG enumerates all dominating sets 
    in $\order{1}$ time per solution by using $\order{n + m}$ space. 
\end{theorem}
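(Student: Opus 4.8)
The plan is to split the argument into correctness, running time, and space, reusing the machinery already assembled for \EnumDSG. Correctness is essentially free: the discussion preceding the lemmas establishes that \EnumDSG uses the same parent--child relation as \EnumDS, that \CandG correctly computes $\Del[1]{X}{v}$, $\Del[2]{X}{v}$, and $\Del[3]{X}{v}$, and that every vertex deleted from the maintained graph remains dominated. Hence the tree traversed by \EnumDSG is exactly $\Fam{G}$, and by the same reasoning as Theorem~\ref{theo:enum} the algorithm outputs every dominating set exactly once. So I would state correctness in a sentence and devote the real work to the amortized time bound.

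The heart of the proof is matching an upper bound on the work per recursive call against a lower bound on the number of nearby descendants. Fix a dominating set $X$. Summing the per-vertex cost of Lemma~\ref{lem:dsg:delte} over all $v \in \cand{X}$, and adding the $\order{\size{\cand{X}}}$ cost of the degree-one vertices isolated by Lemma~\ref{lem:pendant}, bounds the total non-recursive work of one call of \AllChildren on $X$ by the quantity in (\ref{eq:upper}) summed over $\cand{X}$, together with $\order{\size{\cand{X}}}$. Lemma~\ref{lem:child} shows that the expression (\ref{eq:lower})---which is exactly this quantity plus $\size{\cand{X}}$---is, up to constants, a lower bound on the number of children, grandchildren, and great-grandchildren of $X$. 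Therefore the non-recursive work at $X$ is $\order{1}$ times the number of descendants of $X$ within three generations.

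It then remains to sum over the whole family tree and charge the work to descendants. Each solution $Z \in \Sol{G}$ is the child of exactly one node, the grandchild of at most one node, and the great-grandchild of at most one node, so $Z$ is charged by at most three ancestors. Consequently $\sum_{X}(\text{non-recursive work at } X) = \order{\size{\Sol{G}}}$, which is constant amortized time per solution; distributing the $\order{n+m}$ preprocessing over the output preserves the $\order{1}$ per-solution bound. The space bound follows exactly as in Theorem~\ref{theo:dsd:time}: a vertex deleted from a data structure in one recursive call is never reinserted in a descendant call, each call keeps only a constant number of structures and copies pointers rather than contents, and restoring the structures on backtracking costs no more than the forward updates, so the total space is $\order{n+m}$.

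The main obstacle is the amortization itself, and in particular the reliance of Lemma~\ref{lem:child} on girth at least nine. The lower-bound lemmas manufacture distinct children, grandchildren, and great-grandchildren of $Y = X \setminus \set{v}$ by local modifications, and the absence of short cycles is precisely what guarantees these generated solutions are mutually distinct and are not reproduced from a different removed vertex $v'$ or a different intermediate vertex. Verifying that no descendant is counted more than a constant number of times across the three families---so that the charging scheme is sound---is where the girth hypotheses are spent, and it is the step I would treat most carefully.
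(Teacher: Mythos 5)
Your proposal is correct and follows essentially the same route as the paper: correctness via Theorem~\ref{theo:enum} together with Lemmas~\ref{lem:cand:dsg:del1} and~\ref{lem:cand:dsg:del2}, space as in Theorem~\ref{theo:dsd:time}, and the time bound by matching the per-call cost (Lemmas~\ref{lem:dsg:delte} and~\ref{lem:pendant}) against the descendant count of Lemma~\ref{lem:child} and charging each recursion's work to descendants within three generations. Your explicit observation that each solution is charged by at most three ancestors (its parent, grandparent, and great-grandparent) is a slightly more careful statement of the paper's ``distribute $\order{1}$ cost to the corresponding descendant'' step, but it is the same argument.
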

\begin{proof}
    The correctness of \EnumDSG is shown by Theorem~\ref{theo:enum}, Lemma~\ref{lem:cand:dsg:del1}, and Lemma~\ref{lem:cand:dsg:del2}.
    By the same discussion with Theorem~\ref{theo:dsd:time}, 
        the space complexity of \EnumDSG is $\order{n + m}$. 
    We next consider the time complexity of \EnumDSG. 
    From Lemma~\ref{lem:dsg:delte}, Lemma~\ref{lem:pendant}, and Lemma~\ref{lem:child}.
    we can amortize the cost of each recursion 
        by distributing $\order{1}$ time cost to the corresponding descendant discussed in the above lemmas.
    Thus, the amortized time complexity of each recursion becomes $\order{1}$. 
    Moreover, each recursion outputs a solution. 
    Hence,  \EnumDSG enumerates all solutions in $\order{1}$ amortized time per solution. 
\end{proof}


\section{Conclusion}
\label{sec:conc}
In this paper, we proposed two enumeration algorithms.
\EnumDSD solves the dominating set enumeration problem 
in $\order{k}$ time per solution by using $\order{n + m}$ space,
where $k$ is a degeneracy of an input graph $G$. 
Moreover, \EnumDSG solves this problem in constant time per solution
if an input graph has girth at least nine.

Our future work includes to develop efficient dominating set enumeration algorithms for dense graphs. 
If a graph is dense, 
then $k$ is large and $G$ has many dominating sets.
For example, in the case of complete graphs, 
$k$ is equal to $n - 1$ and every nonempty subset of $V$ is a dominating set.
That is, the number of solutions for a dense graph is much larger than 
that for a sparse graph. 
This allows us to spend more time in each recursive call. 
However, \EnumDSD is not efficient for dense graphs 
although the number of solutions is large. 
Moreover, if $G$ is small girth, that is, $G$ is dense then 
\EnumDSG does not achieve constant amortized time enumeration. 
Hence,  
the dominating set enumeration problem for dense graphs is interesting.

\bibliographystyle{abbrv}
\bibliography{main.bbl}

\end{document}